\documentclass[centertags,sumlimits,intlimits,namelimits]{amsart}
\usepackage{a4}
\usepackage{amsfonts,amsthm,amssymb}
\usepackage[pdftex,colorlinks,urlcolor=red,citecolor=blue,linkcolor=red]{hyperref}

\usepackage{chicago} 


\newcommand{\Bcal}{{\mathcal B}}

\newcommand{\Fcal}{{\mathcal F}}
\newcommand{\Gcal}{{\mathcal G}}
\newcommand{\Hcal}{{\mathcal H}}

\newcommand{\Ocal}{{\mathcal O}}
\newcommand{\Pcal}{{\mathcal P}}

\renewenvironment{description}{\list{}{
\leftmargin 12pt \itemindent8pt
}
}{
  \endlist
}

\marginparwidth = 30 true mm



\newcommand{\half}{\frac{1}{2}}


\newtheorem{thm}{Theorem}[section]
\newtheorem{theorem}{Theorem}[section]
\newtheorem{prop}[thm]{Proposition}
\newtheorem{lem}[thm]{Lemma}

\newtheorem{corollary}[thm]{Corollary}

\newtheorem{ass}[thm]{Assumption}

\theoremstyle{definition}
\newtheorem{defin}[thm]{Definition}
\newtheorem{definition}[thm]{Definition}

\theoremstyle{remark}

\newtheorem{remark}[thm]{Remark}

\newcommand{\Ind}{{ 1}}
\newcommand{\ind}[1]{\Ind_{\{#1\}}}



\newcommand{\E}{\mathbb{E}}
\newcommand{\RR}{\mathbb{R}}
\newcommand{\PP}{\mathbb{P}}
\renewcommand{\P}{\PP}

\newcommand{\NN}{\mathbb{N}}
\newcommand{\N}{\mathbb{N}}


\newcommand{\cB}{{\mathcal B}}

\newcommand{\cF}{{\mathcal F}}

\newcommand{\cP}{{\mathcal P}}
\newcommand{\cE}{{\mathcal E}}

\newcommand{\cI}{{\mathcal I}}

\newcommand{\cT}{{\mathcal T}}

\newcommand{\stpr}[1]{(#1_t)_{t \ge 0}}

\renewcommand{\ind}[1]{\Ind_{\{#1\}}}

\newcommand{\squishlist}{
   \begin{list}{$\bullet$}
    { \setlength{\itemsep}{0pt}      \setlength{\parsep}{3pt}
      \setlength{\topsep}{3pt}       \setlength{\partopsep}{0pt}
      \setlength{\leftmargin}{1.5em} \setlength{\labelwidth}{1em}
      \setlength{\labelsep}{0.5em} }
      }

\newcommand{\Q}{\mathbb{Q}}

\newcommand{\R}{\RR}


\begin{document}

\hyphenation{Hil-bert}

\title[Existence and Monotonicity]{{\sf  Dynamic Term Structure Modelling with Default and Mortality Risk: New Results on  Existence  and Monotonicity}}
\author {Thorsten Schmidt}
\address{Technical University Chemnitz, Reichenhainer Str. 41, 09126 Chemnitz, Germany. Email:
thorsten.schmidt@mathematik.tu-chemnitz.de }
\author{Stefan Tappe}
\address{Leibniz Universit\"{a}t Hannover, Institut f\"{u}r Mathematische Stochastik, Welfengarten 1, 30167 Hannover, Germany. Email:
tappe@stochastik.uni-hannover.de}
\date{25 June, 2013}

\begin{abstract}
This paper considers general term structure models like the ones appearing in portfolio credit risk modelling or life insurance. We give a general model starting from families of forward rates driven by infinitely many Brownian motions and an integer-valued random measure, generalizing existing approaches in the literature. Then we derive drift conditions which are equivalent to no asymptotic free lunch on the considered market. Existence results are also given. In practice, models possessing a certain monotonicity are favorable and we study general conditions which guarantee this. The setup is illustrated with some examples.

\smallskip
\noindent \textbf{Key words.}  large bond markets, no asymptotic free lunch, default risk, life insurance, inifinite dimensional models, term structure of forward spreads, marked point processes, monotonicity, stochastic partial differential equations
\end{abstract}

\maketitle

\section{Introduction}

Numerous works in the literature study infinite-dimensional bond markets, with or without credit risk as for example  \citeN{BMKR}, \citeN{Filipovic2001}, \citeN{SchmidtOezkan}, \citeN{EkelandTaflin2005}, \citeN{TSchmidt_InfiniteFactors},   \citeN{CarmonaTehranchi2006},  \citeN{JakubowskiZabczyk:EMHJM}, \citeN{BarskiJakubowskiZabczyk2011}, and \citeN{BarsikZabczyk2012}  among many others.

In this paper we study a general account of such markets: we consider bond prices of the form
$$ P(t,T,\eta) $$
where $t \le T$ is current time, $T$ denotes the maturity and $\eta \in \cI$ denotes a quality index. This can refer to the credit quality of a term structure model, as it is the case in so-called rating approaches (see \citeN{JLT},  \citeN{BieleckiRutkowski00}, and \citeN{eberlein.oezkan03}). Or it could take the role of the number of already occurred losses in the context of credit portfolio modelling and collateralized debt obligations (see \citeN{FilipovicSchmidtOverbeck} and references therein). In the context of life insurance, $\eta$ denotes the age of the considered individual and models the effect that a higher age influences the survival probability as in \citeN{TappeWeber2013}. Also models for market or liquidity impacts have a similar structure, compare the recent approach in \citeN{JarrowRoch2013}.

Essentially we only assume that bond prices are non-negative and have a weak regularity in $T$ and $\eta$. This allows us to consider forward rate models in the Musiela parameterization, i.e.
$$ P(t,T,\eta)= I(t) \exp\bigg( \int_0^{T-t} r(t,u,\eta) du \bigg); $$
here $I$ is an indicator being zero when the bond prices are zero and one otherwise.
This is the starting point for modelling $r$ as Hilbert-space valued stochastic process given by a stochastic partial differential equation (SPDEs). The market of bond prices is certainly a large financial market which allows us to utilize the well-known concept of no asymptotic free lunch (NAFL) introduced in \citeN{Klein2000}. In this regard, the market satisfies an appropriate formulation of NAFL if and only if there exists an equivalent local martingale measure (ELMM). 

Conditions which render a measure an ELMM are following the approach of \citeN{HJM} and give the drift in terms of the volatility, which is why they are called drift conditions. In our case it turns out that two conditions are needed, one is a generalization of the classical drift condition to our more general setup and the second one links the instantaneous rate earned by the bond to the instantaneous risk and the risk-free rate. It turns out that this second condition makes it difficult to obtain explicit models. 

In this regard we consider a special setting where we are able to obtain existence results in our setup. We proceed in two steps: first we use the results obtained in \citeN{Jacod75} for martingale problems for marked point processes to obtain existence of a driving quality index process. Then we employ techniques from \citeN{Tappe2012a} to obtain conditions which guarantee existence of a unique mild solution of the SPDE for $r$ when the drift condition is satisfied. 

From a practical viewpoint it is natural, that a bond with a lower quality should be cheaper than a bond with higher quality. This is in general not implied by absence of arbitrage, as we discuss. However, if interest rates are non-negative, or more general, bond prices are martingales, the fact that the payoffs of the bonds are monotone in terms of the credit quality immediately gives monotonicity as the expectation is a monotone operator.  
More generally, we show that if the model is positivity preserving, then monotonicity follows. Finally we give sufficient conditions which yield positivity preserving term structures. Related results appear in \citeN{Barski2013}.

The organisation of the paper is as follows: after introducing the setup in Section 2, we discuss in detail the concept of absence of arbitrage considered in this paper. In particular, we derive the mentioned drift conditions.
Section 3 covers the existence results while Section 4 deals with positivity and monotonicity. In Section 5 we give a number of examples which illustrate the results.

\section{Arbitrage-free term structure movements}

\label{sec:generalsetup}
Consider a filtered probability space $(\Omega,\cF,(\cF_t)_{t \ge 0},\P)$ where the filtration $(\cF_t)_{t \ge 0}$ satisfies the usual conditions, i.e.\ it is right-continuous and $\cF_0$ contains all nullsets of $\cF$. 

We consider a market where bonds are traded. 
A $T$-\emph{bond} is a contingent claim which promises the payoff of one unit of currency at maturity $T$. We denote the price of the $T$-bond  at time $t \le T$ by $P(t,T)$. The bond is called \emph{risk-free} if $\P(P(T,T)=1)=1$. 
The stochastic process $(P(t,T)_{0 \le t \le T})$ describes  the evolution of the $T$-bond over time. 

In contrast to risk-free bonds we consider a more general framework where bonds carry an additional quality index $\eta$. There are bonds with different levels of quality. In this regard we consider a family of term structure models 
$$ \big\{(P(t,T,\eta))_{0 \le t \le T}: T \geq 0, \eta \in \cI\big\} $$
with some interval or countable set  $\cI\subset \R$. For our purposes, the typical choice will be  $\cI=[0,1]$.
The  index $\eta$ is called \emph{quality} of the bond. A bond with maturity $T$ and quality $\eta$  is called $(T,\eta)$-bond. 
Besides referencing to credit risk, the quality index can also refer to the liquidity of the bond or to the age of an individual in a life-insurance context, see Section \ref{examples} for examples.
This kind of term structures certainly play a central role in modelling multiple yield curves  or term structures for markets of collateralized debt obligations (CDOs) as we explain in Section \ref{sec:examples}. It turns out, that a basic tool for more involved models  is  to consider bonds with payoff zero or one which we treat here. 

\subsection{Absence of arbitrage in infinite dimensional bond markets}
The considered market of $(T,\eta)$-bonds is a market which contains an infinite number of traded assets. We view this setup in spirit of large financial markets and introduce the right concept of no-arbitrage in our setup which is  \emph{no asymptotic free lunch} (NAFL). This concept has been applied to bond markets in \citeN{KleinSchmidtTeichmann2013} and we generalize their work to our setting where bond prices are also indexed by credit quality. 

In this chapter we fix a finite time horizon $T^*>0$. Denote by $D=(D_t)_{0 \le t \le T^*}$ the risk-free bank account which is a non-negative, adapted process with $D(0)=1$. We will need the following assumption on continuity of the bond prices in $T$ and $\eta$ and on uniform local boundedness of bond prices and on local boundedness of the discounting factor.
\begin{ass}\label{ass1}
There is $N\in\cF$ with $\P(N)=0$ such that $N_1 \cup N_2\subset N$ where
\begin{align*}
N_1 &:= \bigcup_{t \in [0,T^*], \eta \in \cI} \big\{ \omega: T \to P(t,T,\eta)(\omega) \text{ is not absolutely continuous}\big\}, \\
N_2 &:= \bigcup_{t \in [0,T^*], t \le T \le T^* } \big\{ \omega: \eta \to P(t,T,\eta)(\omega) \text{ is not right continuous}\big\}.
\end{align*} 
\end{ass}
Note that in classical HJM-models absolute continuity in maturity always holds, such that $\P(N_1)=0$. We need furthermore right-continuity in the quality $\eta$ of the term structure models. 

\begin{ass}\label{ass2} The following holds:
\begin{enumerate}\renewcommand{\labelenumi}{(\roman{enumi})\ }
\item For any $(T,\eta)$ there is $\epsilon>0$, an increasing sequence of stopping times
$\tau_n\to\infty$ and $\kappa_n\in [0,\infty)$ such that
$$P(t,U,\xi)^{\tau_n}\leq \kappa_n,$$ for all $U\in[T,T+\epsilon)$, $\xi \in [\eta,\eta+\epsilon) \cap \cI$ and all
$t\leq T$,
\item $(D(t)) _{0\leq t\leq T^*}$ is locally bounded. 
\end{enumerate}
\end{ass}

\begin{definition}\label{LFM}
Fix a sequence $(T_i)_{i\in\N}$ in $[0,T^*]$ and a set $(\eta_i)_{i \in \N}\subset \cI$.
Define the $n^2+1$-dimensional stochastic process  $(\mathbf S^n)=(S^{0},S^{11},\dots,S^{nn})$ as follows: 
\begin{align}S^{ij}_t=
D(t) P(t \wedge T_i ,T_i, \eta_j),\qquad 0 \le t \le T^*, 
 \label{defSi}
\end{align}
for $i,j=1,\dots,n$ and $S^0_t\equiv 1$. The large financial market consists of the  sequence of classical markets $(\mathbf S^n)$.
\end{definition}
Note that this assumption is fullfilled when the family of term structure models is locally bounded and non-decreasing in $(T,\eta)$, a criterion for which we derive sufficient conditions in Section \ref{sec-pos-mon}.

In large financial markets absence of arbitrage is considered for each finite market $\mathbf{S}^n$ and appropriate limits. In this way we are able to avoid using measure-valued strategies, as for example used in \citeN{DeDonnoPratelli2005}.
Let $\mathbf{H}$ be a predictable $\mathbf{S}^n$-integrable process and denote by $(\mathbf{H}\cdot \mathbf{S}^n)_t$ the stochastic integral
of $\mathbf{H}$ with respect to $\mathbf{S}^n$ until $t$. The process $\mathbf{H}$ is called \emph{admissible
trading strategy} if $\mathbf{H}_0=0$ and there is $a>0$ such that $(\mathbf{H}\cdot \mathbf{S}^n)_t\ge -a$, $0 \le t \le T^*$.
Define the following cones:
\begin{equation}
\mathbf{K}^n =\{(\mathbf{H}\cdot \mathbf{S}^n)_{T^*}:\text{$H$ admissible}\}\text{ and }
\mathbf{C}^n =(\mathbf{K}^n-L^0_+)\cap L^{\infty}.\label{K}
\end{equation}
$\mathbf{K}^n$ containes all replicable claims in the finite market $n$, and $\mathbf{C}^n$ contains all claims in $L^{\infty}$
which  can be superreplicated. We define the set $\mathbf{M}_e^n$ of equivalent separating measures for the finite market $n$ as
\begin{align}\label{Me}
 \mathbf{M}_e^n &=
 \{\Q\sim \P|_{T^*}: \mathbf S^n \text{ is local $\Q$-martingale}\}
 \end{align}
 If $\mathbf{S}^n$ is bounded then $\mathbf{M}_e^n$ consists of all equivalent probability measures such that $\mathbf{S}^n$ is a (true) martingale.

We assume that for each finite market $n$ no arbitrage holds, i.e.
\begin{equation}\label{emm}
\mathbf{M}^n_e\ne\emptyset,\quad\quad\text{ for all $n\in\N$}.
\end{equation}
However, there is still the possibility of approximations
of an arbitrage profit by trading on the sequence of market models and we arrive at 
the following formulation.
\begin{definition}\label{N(A)FL} A given large financial market satisfies NAFL if
$$
\overline{\bigcup_{n=1}^{\infty}\mathbf{C}^n}^*\cap L^{\infty}_+ =\{0\}.
$$
\end{definition}

\begin{definition}\label{NAFL}
The family of term structure models
$ \{(P(t,T,\eta))_{0 \le t \le T}: T \geq 0, \eta \in \cI\}$ 
 satisfies NAFL if there exists a dense sequences
$(T_i)_{i\in\N}$ in $[0,\infty)$ and $(\eta_i)_{i \in \N}$ in $\cI$, such that  the large financial market of Definition~\ref{LFM} satisfies the condition NAFL.
\end{definition}

Inspection of  the proof  in  \citeN{KleinSchmidtTeichmann2013}, Theorem 5.2, shows that the following result holds in our case.
\begin{theorem}\label{th1}
Assume that Assumptions \ref{ass1}, \ref{ass2} and \eqref{emm} hold.  
The family of term structure models $ \{(P(t,T,\eta))_{0 \le t \le T}: 0 \le T \le T^*, \eta \in \cI\}$  satisfies NAFL, if and only if there exists a measure $\Q^*\sim \P|_{T^*}$ such that
\begin{align}\label{EMM}
(D_{t} P(t,T,\eta))_{0 \leq t \leq T} \ \text{are local $\Q^*$-martingales for all } (T,\eta) \in [0,T^*] \times \cI.
\end{align}
\end{theorem}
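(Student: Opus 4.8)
\emph{Proof plan.} The strategy is to reduce the equivalence to the abstract fundamental theorem of asset pricing for large financial markets of \citeN{Klein2000} — exactly as in the cited Theorem 5.2 of \citeN{KleinSchmidtTeichmann2013} — and then to upgrade the resulting martingale property from the countable family indexed by the dense sequences $(T_i)$, $(\eta_i)$ to the full continuum of pairs $(T,\eta) \in [0,T^*] \times \cI$. Klein's theorem states that the large financial market built from the finite markets $\mathbf{S}^n$ of Definition~\ref{LFM} satisfies NAFL if and only if there is a single measure $\Q^* \sim \P|_{T^*}$ that is a separating measure for \emph{every} $\mathbf{S}^n$ simultaneously, i.e. one that separates $\overline{\bigcup_n \mathbf{C}^n}^*$ from $L^\infty_+$. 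This is where the standing hypotheses enter: Assumption~\ref{ass2} gives local boundedness of the discounting factor $D$ and a uniform local bound on the bond prices, so each $\mathbf{S}^n$ is locally bounded and a separating measure is automatically a local martingale measure, whence the common separating measure lies in $\bigcap_n \mathbf{M}^n_e$; condition \eqref{emm} guarantees that each $\mathbf{M}^n_e$ is nonempty, which is the non-degeneracy needed for the criterion to bite.

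For the implication \eqref{EMM} $\Rightarrow$ NAFL nothing must be extended. If $\Q^*$ makes every discounted bond price $(D_t P(t,T,\eta))_{0 \le t \le T}$ a local martingale, then in particular the restriction to $[0,T_i]$ of each coordinate $S^{ij}$ of every $\mathbf{S}^n$ — which is precisely the discounted $(T_i,\eta_j)$-bond price — is a local $\Q^*$-martingale, so $\Q^* \in \bigcap_n \mathbf{M}^n_e$, and Klein's criterion yields NAFL directly.

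For the converse I would start from NAFL, invoke Klein's theorem to obtain $\Q^* \in \bigcap_n \mathbf{M}^n_e$, and read off that $(D_t P(t \wedge T_i, T_i, \eta_j))_{0 \le t \le T^*}$ is a local $\Q^*$-martingale for all $i,j \in \N$; this establishes \eqref{EMM} for maturities and qualities drawn from the dense sequences. To reach an arbitrary $(T,\eta)$ I would pick, using density, $T_i \to T$ and $\eta_j \downarrow \eta$ and argue that $D_t P(t \wedge T_i, T_i, \eta_j) \to D_t P(t \wedge T, T, \eta)$ pathwise off the nullset $N$ of Assumption~\ref{ass1}: absolute continuity in $T$ supplies continuity of the maturity map, while right-continuity in $\eta$ handles the one-sided limit $\eta_j \downarrow \eta$. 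The limit of local martingales is then shown to stay a local $\Q^*$-martingale by localizing with the stopping times $\tau_n$ of Assumption~\ref{ass2}(i): on each stochastic interval $[0,\tau_n]$ the uniform bound $P(\cdot,U,\xi)^{\tau_n} \le \kappa_n$ for $U \in [T,T+\epsilon)$ and $\xi \in [\eta,\eta+\epsilon)\cap\cI$ dominates the whole approximating family, so the defining conditional expectations converge.

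The hard part will be this final extension step, and it is exactly what the regularity hypotheses are tailored for. Two points are delicate. First, the approximation in $\eta$ must come \emph{from the right}, which is why only right-continuity is postulated and why the uniform bound of Assumption~\ref{ass2}(i) is stated on the one-sided neighbourhoods $[\eta,\eta+\epsilon)$ and $[T,T+\epsilon)$ rather than on two-sided ones. Second, the localizing sequence $(\tau_n)$ must control the entire family $\{P(t,T_i,\eta_j)\}_{i,j}$ \emph{uniformly}, so that dominated convergence may be applied to the stopped processes with a localization that does not depend on the approximating indices. Once uniform integrability on each $[0,\tau_n]$ is secured, stability of the local martingale property under the limit gives \eqref{EMM} for every $(T,\eta)$, completing the equivalence.
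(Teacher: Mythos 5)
Your proposal is correct and follows essentially the same route as the paper, which proves this result by inspection of the proof of Theorem 5.2 in Klein, Schmidt and Teichmann (2013): Klein's fundamental theorem for large financial markets yields a single measure $\Q^*\in\bigcap_n \mathbf{M}^n_e$ for the countable family indexed by the dense sequences, and the extension to all $(T,\eta)$ uses exactly the absolute continuity in $T$ and right-continuity in $\eta$ from Assumption~\ref{ass1} together with the uniform local bounds of Assumption~\ref{ass2} to pass the local martingale property to the limit. Your identification of the one-sided approximation in $\eta$ and of the need for a localizing sequence that is uniform over the approximating indices matches the role these hypotheses play in the cited argument.
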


Such a measure $\Q^*$ is called equivalent local martingale measure (ELMM). In the  following section we derive drift conditions in spirit of the classical Heath-Jarrow-Morton drift condition which give \eqref{EMM} for arbitrary $T^*$.

\subsection{The considered term structure models}

As is customary in term structure modelling we directly consider the filtered probability space $(\Omega,\cF,(\cF_t)_{t \ge 0},\Q)$ with $\Q\sim\P$. The aim of the following sections is to give a precise setting of the considered term structure models under $\Q$ and thereafter derive conditions which are equivalent to NAFL.

In line with credit risk models, we associate a stopping time $\tau_\eta$ to each quality level $\eta$ and we assume that the payment of the $(T,\eta)$-bond takes place only if $\tau_\eta > T$.
It will be convenient to represent the model in terms of forward rates, such that we make the weak assumption that  $(T,\eta)$-bonds can be represented  by
\begin{align} \label{eq:TxbondsViaForwardRates}
 P(t,T,\eta) = \ind{\tau_\eta > t} \exp \bigg(-\int_t^T f(t,u,\eta) du \bigg);
 \end{align}
 $f(t,T,\eta)$ is called $(T,\eta)$-\emph{forward rate} at time $t$.

Our aim is to consider general, infinite-dimensional models for $f$.
In the spirit of \citeN{BMKR} and \citeN{CarmonaTehranchi}  we assume that $(T,\eta)$-forward rates  follow a semimartingale of the form
\begin{align} \label{def:f}
    df(t,T,\eta)  &= \alpha(t,T,\eta) dt +  \sigma(t,T,\eta) d W_t  \\
                  &+ \int_E \gamma(t,T,\eta,x) ( \mu(dt,dx) - F_t(dx)dt ), \qquad 0 \le t \le T; \nonumber
\end{align}
here $W$ is a $Q$-Wiener process on a separable Hilbert space $U$ with a trace class operator $Q \in L(U)$ (see \citeN{DaPratoZabczyk}) 
 and $\mu$ is a  integer-valued random measure on $\R_+ \times E$ with absolutely continuous compensator $dt \otimes F_t(dx)$ and
$E$ is the mark space. The mark space is a measurable space $(E,\mathcal{E})$  which we assume to be a Blackwell space (see \citeN{DellacherieMeyer}).
We remark that every Polish space with its Borel $\sigma$-field is a Blackwell space.

We assume that the stopping times $\tau_\eta$ have the following representation in terms of the random measure $\mu$. By $\Pcal$ we denote the predictable $\sigma$-algebra on $\Omega \times \R_+$. 
\begin{itemize}\setlength{\itemindent }{-2.5mm}
\item[{\bf(A1)}]  There is a $\mathbb{R}$-valued, $\mathcal{P} \otimes \mathcal{B}(\cI)  \otimes \cE$-measurable process $\beta$ such that 
\begin{align} \label{eq:tau}
  \ind{\tau_\eta >t } = 1+\int_0^t \int_E \ind{\tau_\eta \ge s} \beta(s,\eta,x) \mu(ds,dx).
\end{align}
\end{itemize}
\begin{remark}
Set $Y_t := \ind{\tau_\eta > t}$. Then  the representation \eqref{eq:tau} reads
$$  Y_t = 1+ \int_0^t \int_E  Y_{s-} \beta(s,\eta,x) \mu(ds,dx). $$
In Section \ref{sec:infinitebondmarket} we show how such a representation can be obtained in infinite
dimensional bond markets. 
\end{remark}

Recall that the separable Hilbert space $U$ denotes the state space of the Wiener process $W$.
Then there exists an orthonormal basis $(e_j)_{j \in \mathbb{N}}$ of $U$ and a sequence $(\lambda_j)_{j \in \mathbb{N}} \subset (0,\infty)$ with $\sum_{j \in \mathbb{N}} \lambda_j < \infty$ such that for all $u \in U$
  \begin{align*}
      Qu = \sum_{j \in \mathbb{N}} \lambda_j \langle u,e_j \rangle_{U} \, e_j;  
  \end{align*}
the $\lambda_j$ are the eigenvalues of $Q$, and each $e_j$ is an eigenvector corresponding to $\lambda_j$. The space $U_0 := Q^{1/2}(U)$, equipped with the inner product
\begin{align*}
\langle u,v \rangle_{U_0} := \langle Q^{-1/2} u, Q^{-1/2} v \rangle_{U}, 
\end{align*}
is another separable Hilbert space
and $( \sqrt{\lambda_j} e_j )_{j \in \mathbb{N}}$ is an orthonormal basis.
According to \citeN[Prop. 4.1]{DaPratoZabczyk}, the sequence of stochastic
processes $( W^j )_{j \in \mathbb{N}}$ defined as $W^j :=
\frac{1}{\sqrt{\lambda_j}} \langle W, e_j \rangle$ is a sequence of
real-valued independent Brownian motions and we
have the expansion
\begin{align*}
W = \sum_{j \in \mathbb{N}} \sqrt{\lambda _j} W^j e_j.
\end{align*}
Given another separable Hilbert space $H$, we denote by $L_2^0(H) := L_2(U_0,H)$ the space of Hilbert-Schmidt
operators from $U_0$ into $H$, which, endowed with the
Hilbert-Schmidt norm
\begin{align*}
\| \Phi \|_{L_2^0(H)} := \sqrt{\sum_{j \in \mathbb{N}} \lambda_j \| \Phi e_j \|^2},
\quad \Phi \in L_2^0(H)
\end{align*}
itself is a separable Hilbert space. Note that $L_2^0(H) \cong \ell^2(H)$, because $\Phi \mapsto (\Phi^j)_{j \in \mathbb{N}}$ with $\Phi^j := \sqrt{\lambda_j} \Phi e_j$, $j \in \mathbb{N}$ is an isometric isomorphism. According to \citeN[Thm. 4.3]{DaPratoZabczyk}, for every predictable process $\sigma : \Omega \times
\mathbb{R}_+ \rightarrow L_2^0(H)$ satisfying
\begin{align*}
\mathbb{P} \bigg( \int_0^t \| \sigma_s \|_{L_2^0(H)}^2 ds < \infty
\bigg) = 1 \quad \text{for all $t \geq 0$}
\end{align*}
with $\sigma^j_t:= \sqrt{\lambda_j} \sigma_t e_j$ we have the identity
\begin{align*}
\int_0^t \sigma_s dW_s = \sum_{j \in \mathbb{N}} \int_0^t \sigma_s^j d W_s^j, \quad t \geq 0.
\end{align*}
In particular, the diffusion term in (\ref{def:f}) can be written as
\begin{align*}
\sigma(t,T,\eta) d W_t = \sum_{j \in \mathbb{N}} \sigma^j(t,T,\eta) d W_t^j,
\end{align*}
where $\sigma^j(t,T,\eta)= \sqrt{\lambda_j} \sigma(t,T,\eta) e_j$.

Intensity-based default models correspond well with the assumption on absolute continuity of bond prices (which is implicit in \eqref{eq:TxbondsViaForwardRates}) which will enable us to obtain a drift condition in Theorem \ref{thm1}.
Set 
$$ \lambda(t,\eta) := - \int_E \beta(t,\eta,x) F_t (dx) . $$
The process $(1_{\{ \tau_\eta > \cdot \}})$ is decreasing and hence the Doob-Meyer decomposition gives a unique representation in terms of a local martingale and an absolutely continuous process. We denote this absolutely continuous process
by $\int_0^t \lambda(s,\eta) ds$, the \emph{cumulative intensity}. Then   
\begin{align}\label{eqMx}
    M^\eta_t &:= 1_{\{ \tau_\eta > t\}} + \int_0^t \ind{\tau_\eta \ge s} \lambda(s,\eta) ds
\end{align}
 is the (local) martingale in the Doob-Meyer decomposition. The non-negative process $(\lambda(t,\eta))_{t \ge 0}$ is called  (local) \emph{intensity} of $\tau_\eta$.

\subsection{Musiela parametrization}
It will be more convenient to consider the alternative parametrization
$$ r_t(\xi,\eta) := f(t,t+\xi,\eta) $$
which goes back to \citeN{Musiela}. Then $r$ is one single stochastic process with values in a function space $H$ of curves
$h:\R_+ \times \cI \to \R$ to be specified later. Assume that $r$ is continuous in $\xi$ and denote by $(S_t)_{t \ge 0}$
the shift semigroup on $H$, i.e.\ $S_th(\xi,\eta) = h(\xi+t,\eta)$. Then
equation \eqref{def:f} can be written as the variation of constants formula
\begin{align}\label{eq:rlong}
r_t(\xi,\eta) = S_t r_0(\xi,\eta)&+ \int_0^t S_{t-s} \alpha(s,s+\xi,\eta) ds +  \int_0^t S_{t-s} \sigma(s,s+\xi,\eta) d W_s \nonumber\\
&+\int_0^t \int_E S_{t-s}\gamma(s,s+\xi,\eta,x) (\mu(dt,dx)-F_s(dx)ds);
\end{align}
here $r_0\in H$ denotes the initial value of $r$ and $S_{t-s}$ operates on the functions
$\xi \mapsto \alpha(s,s+\xi,\eta)$, $\xi \mapsto \sigma(s,s+\xi,\eta)$, and $\xi \mapsto \gamma(s,s+\xi,\eta,x)$. In the following,
we will suppress the dependence on $(\xi,\eta)$. We obtain that \eqref{eq:rlong} can be written equivalently as the mild solution of
\begin{align}\label{dyn:r}
d r_t = \bigg( \frac{d}{d\xi} r_t + \alpha_t \bigg) dt  + \sigma_t d W_t + \int_E \gamma_t(x) \,  (\mu(ds,dx)-F_s(dx)ds).
\end{align}
In the following we denote by $\bar \mu(ds,dx):= \mu(ds,dx)-F_s(dx)ds$ the compensated random measure.

We make the following technical assumptions: Denote by
$\Ocal$ and $\Pcal$ the optional and predictable
$\sigma$-algebra on $\Omega\times\R_+$, respectively. Set $\cT:= \R_+\times \cI$.
\begin{itemize}
  \item[{\bf(A2)}] The initial curve $r_0$ is
  $\Bcal(\R_+)\otimes\Bcal(\cI)$-measurable, and locally integrable:
  \[ \int_0^\xi  |r_0(u,\eta)|\,du <\infty\quad\text{for all
  }(\xi,\eta) \in \cT, \quad \text{$\Q$--almost surely.}\]

  \item[{\bf(A3)}] The {\emph{drift}}  $\alpha_t(\xi,\eta)$ is $\R$-valued,
  $\Ocal\otimes\Bcal(\R_+)\otimes\Bcal(\cI)$-measurable, and locally
integrable: \[
  \int_0^\xi \int_0^\xi|\alpha_t(u,\eta)| \,du\, dt<\infty\quad\text{ for all $(\xi,\eta) \in \cT$,} \quad \text{$\Q$--almost surely.}\]

  \item[{\bf(A4)}] The \textit{volatility} $\sigma_t(\xi,\eta)$ is $L_2^0(\mathbb{R})$-valued, $\mathcal{O} \otimes \mathcal{B}(\mathbb{R}_+) \otimes \mathcal{B}(\cI)$-measurable, and locally square integrable:
\begin{align*}
\mathbb{E} \bigg[ \int_0^{\xi} \int_0^{\xi} \| \sigma_t(u,\eta) \|_{L_2^0(\mathbb{R})}^2 du dt \bigg] < \infty \quad\text{ for all $(\xi,\eta) \in \cT$.}
\end{align*}

  \item[{\bf(A5)}] The \textit{jump}-term $\gamma_t(x)(\xi,\eta)$ is $\mathbb{R}$-valued, $\mathcal{P} \otimes \mathcal{E} \otimes \mathcal{B}(\mathbb{R}_+) \otimes \mathcal{B}(\cI)$-measurable, and locally square integrable:
\begin{align*}
\mathbb{E} \bigg[ \int_0^{\xi} \int_0^{\xi} \int_E | \gamma_t(x)(u,\eta) |^2 F_t(dx) du dt \bigg] < \infty \quad\text{ for all $(\xi,\eta) \in \cT$.}
\end{align*}
\end{itemize}
Conditions {(A2)--(A5)} assert that the risk free {\em short
rate} $r_t=r_t(0,1)$ has a progressive version and satisfies $\int_0^T
|r_t|\,dt<\infty$ for all $T$, see e.g.\ \citeN{Filipovic2001}. The \emph{discounting process} in our setup is given by
$$ D_{t} =  e^{-\int_{0}^t r_sds}, \qquad t \ge 0. $$

\subsection{The drift conditions}
This section will derive drift conditions which ensure that the considered probability measure $\Q$ is an equivalent local martingale measure (ELMM). Then NAFL holds by Theorem \ref{th1}.

First, we introduce some notation. Let $A(t,T,\eta) := \int_0^{T-t} \alpha(t,s,\eta) ds$, $\Sigma^j(t,T,\eta) := \int_0^{T-t} \sigma^j(t,s,\eta) ds$
for all $j \in \mathbb{N}$,
and $\Gamma(t,T,\eta,x) := \int_0^{T-t} \gamma(t,s,\eta,x) ds$.

\begin{thm} \label{thm1}
Assume that (A1)--(A5) hold. Then $\Q$ is an ELMM, if and only if
\begin{align}\label{dc1}
      \alpha(t,T,\eta)  &= \sum_{j \in \mathbb{N}} \sigma^j (t,T,\eta) \Sigma^j(t,T,\eta) \\
&- \int _{E} \gamma(t,T,\eta,x) \Big( e^{-\Gamma(t,T,\eta,x) }(1+\beta(t,\eta,x))-1 \Big)  F_t(dx)\nonumber
 \\
    r_t(0,\eta) &= r_t+\lambda(t,\eta) , \label{dc2}
\end{align}
where \eqref{dc1} and \eqref{dc2} hold on $\{\tau_\eta > t\}$, $\Q\otimes dt$-a.s.
\end{thm}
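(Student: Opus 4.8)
The plan is to follow the Heath--Jarrow--Morton strategy adapted to the defaultable, infinite-dimensional setting: express the discounted $(T,\eta)$-bond price as an exponential semimartingale, compute its canonical (drift) part explicitly, and use that condition \eqref{EMM} forces this drift to vanish \emph{simultaneously for every maturity} $T\ge t$. Write $I_t:=\int_t^T f(t,u,\eta)\,du$ and $V_t:=e^{-I_t}$, so that by \eqref{eq:TxbondsViaForwardRates} the discounted price is $D_t P(t,T,\eta)=D_t\,\ind{\tau_\eta>t}\,V_t$. First I would apply the stochastic Fubini theorem to \eqref{def:f}, whose hypotheses are precisely the local (square-)integrability bounds (A3)--(A5), to obtain
\[
dI_t=\big(A(t,T,\eta)-r_t(0,\eta)\big)\,dt+\sum_{j\in\N}\Sigma^j(t,T,\eta)\,dW_t^j+\int_E\Gamma(t,T,\eta,x)\,\bar\mu(dt,dx),
\]
where the term $-r_t(0,\eta)=-f(t,t,\eta)$ arises from the moving lower limit and $A,\Sigma^j,\Gamma$ satisfy $\partial_T A=\alpha$, $\partial_T\Sigma^j=\sigma^j$, $\partial_T\Gamma=\gamma$ by the fundamental theorem of calculus.

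Next I would apply the \Ito formula for \cadlag semimartingales to $V=e^{-I}$: the continuous martingale part contributes $\half\sum_{j\in\N}(\Sigma^j)^2$ and the atoms of $\mu$ turn the linear term $\Gamma$ into its exponential correction, so the drift of $V$ equals $V_{t-}\big[\,r_t(0,\eta)-A+\half\sum_{j\in\N}(\Sigma^j)^2+\int_E(e^{-\Gamma}-1+\Gamma)F_t(dx)\,\big]\,dt$ plus a local martingale. I would then form the product $D_t\,\ind{\tau_\eta>t}\,V_t$ by the product rule: $D$ is continuous of finite variation and contributes $-r_t$; the indicator $\ind{\tau_\eta>t}$ has the pure-jump dynamics \eqref{eq:tau} driven by $\beta$; and ---this is the delicate point--- the covariation $[\,\ind{\tau_\eta>\cdot},V\,]$ is a genuine pure-jump process because both factors jump at the \emph{same} atoms of $\mu$, producing the cross term $\beta(e^{-\Gamma}-1)$. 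Collecting all $F_t(dx)$-terms, $\beta+(e^{-\Gamma}-1+\Gamma)+\beta(e^{-\Gamma}-1)=e^{-\Gamma}(1+\beta)-1+\Gamma$, and \eqref{EMM} becomes the \emph{master equation}, valid on $\{\tau_\eta>t\}$ and $\Q\otimes dt$-a.s. for every $T\ge t$,
\[
r_t(0,\eta)-r_t-A(t,T,\eta)+\half\sum_{j\in\N}\big(\Sigma^j(t,T,\eta)\big)^2+\int_E\big(e^{-\Gamma}(1+\beta)-1+\Gamma\big)F_t(dx)=0.
\]

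Finally I would read off both conditions from this single equation. Evaluating at $T=t$, where $A=\Sigma^j=\Gamma=0$ so that the bracket collapses to $\beta$, and using $\lambda(t,\eta)=-\int_E\beta\,F_t(dx)$ yields \eqref{dc2}. Differentiating the master equation in $T$ ---the $T$-independent terms $r_t(0,\eta)-r_t$ drop out, while $\partial_T A=\alpha$, $\partial_T\Sigma^j=\sigma^j$, $\partial_T\Gamma=\gamma$--- gives exactly \eqref{dc1}. For the converse I would run this backwards: assuming \eqref{dc1}--\eqref{dc2}, the left-hand side of the master equation, viewed as a function of $T$, vanishes at $T=t$ (by \eqref{dc2}) and has vanishing $T$-derivative (by \eqref{dc1}), hence is identically zero; thus the drift of $D_tP(t,T,\eta)$ vanishes and it is a local $\Q$-martingale, so $\Q$ is an ELMM. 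I expect the main obstacles to be analytic rather than algebraic: rigorously justifying the stochastic Fubini interchange and the infinite-dimensional \Ito formula (handled by localisation along the stopping times of Assumption~\ref{ass2}), differentiating under the $F_t(dx)$-integral (dominated convergence via the square-integrability in (A5)), and ---conceptually the crux--- correctly isolating the simultaneous-jump covariation term $[\ind{\tau_\eta>\cdot},V]$, since it is precisely this term that merges the default compensator $\beta$ with the jump exponential to produce the characteristic factor $e^{-\Gamma}(1+\beta)$ distinguishing \eqref{dc1} from the classical HJM drift condition.
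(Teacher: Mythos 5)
Your proposal is correct and follows essentially the same route as the paper: the paper first derives the drift of the pre-default price $p(t,T,\eta)=e^{-\int_0^{T-t}r_t(x,\eta)dx}$ via stochastic Fubini and It\^o (Lemma~\ref{lem2.2}), then applies the product rule to $D_t\,p(t,T,\eta)\,\ind{\tau_\eta>t}$, isolates the same covariation term $(e^{-\Gamma}-1)\beta$, and extracts \eqref{dc2} by setting $T=t$ and \eqref{dc1} by differentiating in $T$ --- exactly your master-equation argument, including the algebraic identity $\beta+(e^{-\Gamma}-1)(1+\beta)=e^{-\Gamma}(1+\beta)-1$. The only cosmetic difference is that the paper works in the Musiela parametrization via the variation-of-constants formula rather than directly with $I_t=\int_t^T f(t,u,\eta)\,du$, which amounts to the same computation.
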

In an auxiliary lemma we derive the dynamics of the pre-default bond prices and thereafter give the
proof of the theorem.
Let
$$ p(t,T,\eta) := \exp\bigg( - \int_0^{T-t} r_t(x,\eta) dx \bigg). $$
\begin{lem} \label{lem2.2}
Under (A2)--(A5) we have  for all $0 \le t \le T$ and $\eta \in \cI$ that
\begin{align*}
  d p(t,T,\eta) &= p(t-,T,\eta) m_t dt  +dM^{T,\eta}_t \end{align*}
  where $m_t$ equals
\begin{align*}
  r_t(0,\eta) -A(t,T,\eta) + \half  \sum_{j \in \mathbb{N}}\Sigma^j(t,T,\eta)^2
  + \int_E \Big(e^{-\Gamma(t,T,\eta,x)} -1 + \Gamma(t,T,\eta,x) \Big) F_t(dx)
  \end{align*} and $M^{T,\eta}$ is the local martingale given in \eqref{eq:Mdrift}.
\end{lem}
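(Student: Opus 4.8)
The plan is to reduce the statement to a classical Heath--Jarrow--Morton computation performed in the forward-rate variables and then an application of the It\^o formula for real-valued jump--diffusions. First I would undo the Musiela parametrization by the change of variables $u = t+x$, so that
\[
p(t,T,\eta) = \exp\Big( -\int_t^T f(t,u,\eta)\,du \Big) = e^{-Z_t}, \qquad Z_t := \int_t^T f(t,u,\eta)\,du .
\]
Differentiating $Z_t$ produces two contributions: the dependence of the integrand on $t$, and the moving lower endpoint. The latter gives the boundary term $-f(t,t,\eta)\,dt = -r_t(0,\eta)\,dt$, while the former is handled by a stochastic Fubini theorem, yielding $dZ_t = -r_t(0,\eta)\,dt + \int_t^T df(t,u,\eta)\,du$. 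This is exactly the content of the generator term $\frac{d}{d\xi}r_t$ appearing in \eqref{dyn:r}, now written directly in the $f$-variables.

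Substituting the forward-rate dynamics \eqref{def:f} and interchanging $\int_t^T du$ with the $dt$, $dW$ and $\bar\mu$ integrals, then using the change of variables $\xi = u-t$ that matches the definitions of $A$, $\Sigma^j$ and $\Gamma$, gives
\[
\int_t^T df(t,u,\eta)\,du = A(t,T,\eta)\,dt + \sum_{j \in \mathbb{N}} \Sigma^j(t,T,\eta)\,dW_t^j + \int_E \Gamma(t,T,\eta,x)\,\bar\mu(dt,dx).
\]
Here the integrability assumptions (A3)--(A5) are precisely what guarantees that every integral is well defined and that the stochastic Fubini interchange (for the infinite-dimensional Wiener integral and the jump integral) is admissible.

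Next I would apply the It\^o formula to $p = e^{X}$ with $X := -Z$. Its continuous martingale part has quadratic variation $d\langle X^c\rangle_t = \sum_{j \in \mathbb{N}} \Sigma^j(t,T,\eta)^2\,dt$ --- this is where the Hilbert--Schmidt structure and the expansion $\sigma\,dW = \sum_j \sigma^j\,dW^j$ enter --- while its jumps are $\Delta X_t = -\Gamma(t,T,\eta,x)$ at an atom $(t,x)$ of $\mu$. Splitting $-\int_E \Gamma\,\bar\mu$ into its continuous (compensator) part and its pure-jump part, It\^o's formula gives
\[
dp_t = p_{t-}\Big( dX_t^c + \half\, d\langle X^c\rangle_t \Big) + p_{t-}\int_E \big( e^{-\Gamma(t,T,\eta,x)} - 1 \big)\,\mu(dt,dx),
\]
with $dX_t^c = \big( r_t(0,\eta) - A(t,T,\eta) + \int_E \Gamma\,F_t(dx) \big)\,dt - \sum_{j \in \mathbb{N}} \Sigma^j(t,T,\eta)\,dW_t^j$.

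Finally I would compensate the jump term via $\int_E (e^{-\Gamma}-1)\,\mu = \int_E (e^{-\Gamma}-1)\,\bar\mu + \int_E (e^{-\Gamma}-1)\,F_t(dx)\,dt$ and collect all $dt$-terms; the two $F_t(dx)$-integrals combine into $\int_E (e^{-\Gamma} - 1 + \Gamma)\,F_t(dx)$, which produces exactly the claimed drift $m_t$, while the remaining part
\[
dM_t^{T,\eta} = -p_{t-}\sum_{j \in \mathbb{N}} \Sigma^j(t,T,\eta)\,dW_t^j + p_{t-}\int_E \big( e^{-\Gamma(t,T,\eta,x)} - 1 \big)\,\bar\mu(dt,dx)
\]
is a local martingale (being an It\^o integral plus a compensated integer-valued-measure integral), matching \eqref{eq:Mdrift}; note that $p_{t-}=p_t$ off a countable set, so writing $p(t-,T,\eta)m_t\,dt$ is consistent. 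I expect the main obstacle to be the rigorous justification of the stochastic Fubini interchange with the moving lower limit, simultaneously for the $U$-valued Wiener integral and the jump integral; once (A3)--(A5) secure this, the remaining It\^o bookkeeping --- correctly tracking the quadratic-variation term and the jump compensation --- is routine but must be carried out carefully to land on the precise form of $m_t$.
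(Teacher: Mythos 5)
Your argument is correct and takes essentially the same route as the paper: both derive the semimartingale decomposition of $\ln p(t,T,\eta)$ by a stochastic Fubini argument (the paper does this in the Musiela parametrization, where your boundary term $-r_t(0,\eta)\,dt$ from the moving lower limit appears as the term $I_2 = \int_0^t r_v(0,\eta)\,dv$), and then apply It\^o's formula to the exponential with the same quadratic-variation term $\tfrac12\sum_j \Sigma^j(t,T,\eta)^2$ and jump-compensation term $e^{-\Gamma}-1+\Gamma$. The only discrepancy is the sign of the compensated-jump integral in your $M^{T,\eta}$, namely $+\int_E(e^{-\Gamma}-1)\,\bar\mu$ versus the minus sign printed in \eqref{eq:Mdrift}; your sign is the one actually produced by the paper's own It\^o computation, so this appears to be a typo in \eqref{eq:Mdrift} rather than an error on your part, and it does not affect the drift $m_t$.
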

\begin{proof} The proof follows the arguments in \citeN{Filipovic2001}.
For $h\in H$ define $\cI_{T}:=\int_0^T h(s) ds$. We fix $\eta \in \cI$ and write
$$\cI_{T-t} r_t := \cI_{T-t} r_t(\cdot,\eta)=\int_0^{T-t} r_t(x,\eta) dx. $$
By the variation of constants formula \eqref{eq:rlong} we have that
\begin{align*}
  \cI_{T-t} r_t &= \cI_{T-t}(S_t r_0) + \int_0^t \cI_{T-t}(S_{t-u} \alpha_u) du + \int_0^t \cI_{T-t}(S_{t-u}\sigma_u) dW_u  \\
  &+ \int_0^t \cI_{T-t} (S_{t-u} \gamma_u(x)) \bar \mu(du,dx),
\end{align*}
where $\bar \mu$ is the compensated random measure. Note that $\cI_{T-t}(S_{t-u} h) = \cI_{T-u} h - \cI_{t-u} h$.
We apply this to all terms  and obtain $\cI_{T-t} r_t = I_1 - I_2$ with
\begin{align*}
I_1 &= \cI_{T} r_0 + \int_0^t \cI_{T-u} \alpha_u du + \int_0^t \cI_{T-u}\sigma_u dW_u  + \int_0^t \cI_{T-u}  \gamma_u(x) \bar \mu(du,dx),
\\ I_2 &= \cI_{t} r_0 + \int_0^t \cI_{t-u} \alpha_u du + \int_0^t \cI_{t-u}\sigma_u dW_u  + \int_0^t \cI_{t-u}  \gamma_u(x) \bar \mu(du,dx).
\end{align*}
We rearrange all the terms with the stochastic Fubini theorem according to the following argument:
\begin{align*}
  \int_0^t \cI_{t-u} \sigma_u dW_u &= \int_0^t \int_0^{t-u} \sigma_u(v) dv \, dW_u \\
  &= \int_0^t \int_u^t \sigma_u(v-u) dv \, dW_u \\
  &= \int_0^t \int_0^v \sigma_u(v-u) dW_u \, dv
\end{align*}
and obtain that
\begin{align*}
  I_2 &= \int_0^t \bigg( S_v r_0(0,\eta) + \int_0^v \Big(S_{v-u}  \alpha_u(0,\eta)du   \\ &\qquad\qquad\qquad\ \,+
  S_{v-u}  \sigma_u(0,\eta)dW_u+ \int_E S_{v-u}  \gamma_u(0,\eta,x) \bar \mu(du,dx) \Big)\bigg) dv \\
  &= \int_0^t r_v(0,\eta) dv.
\end{align*}
Hence, as $-\cI_T(r_0) = \ln p(0,T,\eta)$ we get $\Q$-a.s. and for all $0 \le t \le T$ that
\begin{align*}
\ln p(t,T,\eta) &= -\cI_{T-t}r_t = I_2 - I_1 \\
&=\ln p(0,T,\eta) + \int_0^t r_v(0,\eta) dv \\
&- \int_0^t \Big(\cI_{T-v}\alpha_v dv  + \cI_{T-v} \sigma_v dW_v + \int_E \cI_{T-v}\gamma_v(x) \bar \mu(dv,dx) \Big)
.
\end{align*}
Applying It\^o's formula to $e^x$ yields
\begin{align*}
\lefteqn{p(t,T,\eta) = p(0,T,\eta) + \half \sum_{j \in \mathbb{N}} \int_0^t p(v-,T,\eta) \big( \cI_{T-v} \sigma_v^j \big)^2 dv }\\
&+ \int_0^t p(v-,T,\eta) \bigg[ (r_v(0,\eta) - \cI_{T-v} \alpha_v )dv
- \cI_{T-v}\sigma_v dW_v - \int_E \cI_{T-v} \gamma_v(x) \bar \mu(dv,dx) \bigg] \\
&+ \int_E p(v-,T,\eta) \Big( e^{-\cI_{T-v}\gamma_v(x)}-1 + \cI_{T-v}\gamma_v(x) \Big) \mu(dv,dx) \\
&=  p(0,T,\eta) +  \int_0^t p(v-,T,\eta) \bigg[ \half\sum_{j \in \mathbb{N}}\big( \cI_{T-v} \sigma_v^j \big)^2
+r_v(0,\eta) - \cI_{T-v} \alpha_v  \\
& \qquad\qquad\qquad\qquad\quad\qquad\qquad+ \int_E\Big( e^{-\cI_{T-v}\gamma_v(x)}-1 + \cI_{T-v}\gamma_v(x) \Big) F_t(dx) \bigg] dv  \\
&+M^{T,\eta}_t,
\end{align*}
where $M^{T,\eta}$ are the local martingales
\begin{align}\label{eq:Mdrift}
\int_0^t  p(v-,T,\eta) \Big( - \cI_{T-v}\sigma_v dW_v - \int_E  ( e^{-\cI_{T-v}\gamma_v(x)}-1 ) \bar \mu(dv,dx) \Big).
\end{align}
Inserting the definitions of $A$, $\Sigma$ and $\Gamma$ we conclude.
\end{proof}

\begin{proof}[Proof of Theorem \ref{thm1}]
With the martingale $M^\eta$ from \eqref{eqMx},
\begin{align}
  d \big( D_t P(t,T,\eta)\big) &= d \big( (D_t p(t,T,\eta)) \cdot \ind{ \tau_{\eta} \geq t } \big) \\
      &= D_t p(t-,T,\eta) dM^\eta_t - D_t p(t-,T,\eta) \lambda(t,\eta) \ind{\tau_\eta\ge t} \, dt  \nonumber\\
      &+ \ind{\tau_\eta \ge t} d(D_t p(t,T,\eta))+ d[D p(\cdot,T,\eta), \ind{ \tau_{\eta} \geq \cdot }]_t. \label{temp644}
\end{align}
In the following, we compute all terms separately. 
First, as $D$ is of finite variation, the product rule and Lemma~\ref{lem2.2} give 
\begin{align}
  d (D_tp(t,T,\eta))
&= D_tp(t-,T,\eta) \bigg[  r_t(0,\eta) - r_t -  A(t,T,\eta) + \half \sum_{j \in \mathbb{N}}\Sigma^j(t,T,\eta)^2 \nonumber\\
& +\int _{E}\Big( e^{-  \Gamma(t,T,\eta,x) } -1+\Gamma(t,T,\eta,x) \Big) \,  F_t(dx) \bigg] dt \nonumber\\
& + d\tilde M_t \label{eq:dynamicsDP}
  \end{align}
  where $\tilde M$ is a local martingale.
Second, we recall from Lemma \ref{lem2.2}  that
$$ \Delta (D_t p(t,T,\eta)) = D_t \Delta p(t,T,\eta) = D_t p(t-,T,\eta) \int_E   \left(e^{-\Gamma(t,T,\eta,x)}-1 \right)\mu(dt,dx). $$
Assumption (A1) immediately gives 
$$ \Delta (\ind{\tau_\eta > t}) =  \int_E \ind{\tau_\eta \ge t} \beta(t,\eta,x) \mu(dt,dx) . $$
Altogether we obtain the quadratic covariation of discounted $(T,\eta)$-bond prices and the default indicator process,
\begin{align} \label{eq:jointjumps}
\lefteqn{d [D p(\cdot,T,\eta), \ind{\tau_\eta > \cdot}]_t}  \nonumber \\
&=  D_t  p(t-,T,\eta)   \ind{\tau_\eta \ge t} \int_E   \left(e^{-\Gamma(t,T,\eta,x)}-1 \right) \beta(t,\eta,x)    \mu(dt,dx)
\end{align}
and all terms in \eqref{temp644} have been computed. Note that $ p(t-,T,\eta)   \ind{\tau_\eta \ge t}=p(t-,T,\eta)$.
Finally, $\Q\in{\mathcal Q}$ if and only if $DP$ is a local martingale. 
The drift condition is now obtained by the fact that $DP$ is a local martingale if and only if its drift vanishes. 
On $\{\tau_\eta \le t\}$, $DP$ is zero and no drift condition applies. Otherwise, on $\{\tau_\eta >t\}$
we have $DP = Dp$. 
From \eqref{temp644}, \eqref{eq:dynamicsDP} and \eqref{eq:jointjumps}  we therefore obtain the following drift condition (as  $D_t p(t-,T,\eta)>0$):
\begin{align*}
0
&=     r_t(0,\eta) - r_t -\lambda(t,\eta) -  A(t,T,\eta)  +\half \sum_{j \in \mathbb{N}}\Sigma^j(t,T,\eta)^2 \\
&+ \int _{E} \Big( e^{-  \Gamma(t,T,\eta,x) } -1+\Gamma(t,T,\eta,x) \Big)  F_t(dx) \\
&+ \int_E   \left(e^{-\Gamma(t,T,\eta,x)}-1 \right)   \beta(t,\eta,x)  F_t(dx) \\
&=   r_t(0,\eta) - r_t -\lambda(t,\eta) -  A(t,T,\eta)  +\half \sum_{j \in \mathbb{N}}\Sigma^j(t,T,\eta)^2 \\
&+ \int_E \bigg[  \left(e^{-\Gamma(t,T,\eta,x)}-1 \right) (1+\beta(t,\eta,x))  +\Gamma(t,T,\eta,x)  \bigg] F_t(dx).
\end{align*}
First, letting $T=t$  we obtain \eqref{dc2}. Differentiating the remaining terms with respect to $T$ gives \eqref{dc1}.

For the converse, we need to show that  the drift conditions imply that all discounted digital options are local martingales.
For fixed $\eta$, these conditions imply that on $\{L_t\le \eta\}$ discounted prices are local martingales. On the other side,
on $\{L_t > \eta\}$ the prices are zero by definition and hence martingales. The conclusion follows.
\end{proof}

%
%

\section{Existence}
Existence in the general model of Section \ref{sec:generalsetup} is difficult to obtain. 
It turns out that in many applications, on can concentrate on the following special case, see Section \ref{examples} for appropriate examples. 

Consider a pure-jump process $L$ with values in $\cI$ which can be  interpreted as \emph{quality index} of the considered market. For simplicity we consider $\cI=[0,1]$. We allow for arbitrary granularity, i.e.~all $\tau_\eta := \inf\{t \ge 0: L_t > \eta\},\ \eta \in \cI$ are considered.

Assume that $E=G \times \cI$,  and $G$ is the mark space of a homogeneous Poisson random measure $\tilde \mu$, see \citeN[Def. II.1.20]{JacodShiryaev}. For  existence, homogeneity of $\tilde \mu$ is not relevant, but is simplifies the study of positivity and monotonicity in the following sections.

Denote by $\mu^L$ the Poisson random measure associated to the jumps of $L$, that is
\begin{align}\label{def:L2}
 L_t = \int_0^t \int_{\cI} x \, \mu^L(ds,dx). \end{align}
Letting $\tau_\eta := \inf\{t \ge 0: L_t >\eta \}$ we establish the link to $(T,\eta)$-bonds such that
\begin{align}\label{repr-bond}
P(t,T,\eta) = \ind{L_t \le \eta} \exp \bigg( - \int_t^{T} f(t,u,\eta) du\bigg).
\end{align}
 We set $\mu=\mu^L \otimes \tilde \mu$ and assume that
 $r$ is the mild solution of
\begin{align}\label{def:r2}
d r_t &= \bigg( \frac{d}{d\xi} r_t + \alpha_t \bigg) dt  + \sigma_t d W_t + \int_G \gamma_t(x) \,  \tilde \mu(dt,dx)
 + \int_{\cI} \delta_t(x) \mu^L(dt,dx).
\end{align}
Note that in contrast to \eqref{dyn:r}, $r$ is not given in terms of compensated measures. It will turn out that this leads
to a simplification in the drift conditions, which will be shown in the following proposition.
This setting generalizes the approach \citeN{FilipovicSchmidtOverbeck} in the way that it incorporates jumps in $r$ besides jumps induced by the loss process.
\citeN{SchmidtZabczyk12} study the case where $\tilde \mu$ is a L\'evy-process.

\noindent We adapt Assumptions (A1) and (A5) to this setting. 
\begin{itemize}
\item[{\bf(A1')}] $L_t= \sum_{s\le t} \Delta L_s$ is a c\`adl\`ag, non-decreasing, adapted, pure jump process with values in $\cI$
  which admits an absolutely continuous compensator $\nu^L(t,dx)dt$ satisfying $ \nu^L(t,\cI)<\infty$ for all $t \ge 0$.
  $\tilde \mu$ is a homogeneous Poisson random measure on $\R_+\times G$ with
compensator $dt \otimes \tilde F(dx)$ and $\tilde{F}(G) < \infty$. Moreover, $\tilde \mu$ and $\mu^L$ are independent.
\item[{\bf(A5')}] $\gamma_t(x)(\xi,\eta)$ is $\R$-valued $\Pcal\otimes\Bcal(G)\otimes\Bcal(\R_+)\otimes\Bcal(I)$-measurable,
  and locally square integrable:
  \begin{align*}
    \mathbb{E} \bigg[ \int_0^{\xi} \int_0^{\xi} \int_G | \gamma_t(x)(u,\eta) | \tilde{F}(dx) du dt \bigg] < \infty \quad \text{for all $(\xi,\eta) \in \mathcal{T}$.}
  \end{align*}
  $\delta_t(x)(\xi,\eta)$ is $\R$-valued $\Pcal\otimes\Bcal(I)\otimes\Bcal(\R_+)\otimes\Bcal(I)$-measurable,
  and locally square integrable:
  \begin{align*}
    \mathbb{E} \bigg[ \int_0^{\xi} \int_0^{\xi} \int_{\cI} | \delta_t(x)(u,\eta) | \nu^L(\eta,dx) du dt \bigg] < \infty \quad \text{for all $(\xi,\eta) \in \mathcal{T}$.}
  \end{align*}
\end{itemize}

We obtain absence of arbitrage, or more precisely NAFL, in this setting by a direct application of Theorem~\ref{thm1}. With the notation from this theorem we have that $\sigma^j(t,T,\eta)=\sqrt{\lambda_j}\sigma(t,T,\eta)e_j$ and $\Sigma^j(t,T,\eta)=\int_t^T \sigma^j(t,s,\eta) ds$. Set $\Delta(t,T,\eta,x) := \int_0^{T-t} \delta(t,s,\eta,x) ds.$
\begin{prop}\label{prop:2.1}
Under (A1'), (A2)--(A4) and (A5') and with $r$ as given in \eqref{dyn:r}, we have that $\Q$ is ELMM, if and only if
\begin{align}
      \alpha(t,T,\eta)  &= \sum_{j \in \mathbb{N}} \sigma^j (t,T,\eta) \Sigma^j(t,T,\eta) \label{dc1:2}\\
                        &- \int _{E} \gamma(t,T,\eta,x)  e^{-\Gamma(t,T,\eta,x) } \tilde  F(dx) \nonumber \\
                        &- \int _{\cI} \delta(t,T,\eta,x)  \ind{L_{t-}+x\le \eta} e^{-\Delta(t,T,\eta,x) }  \nu_t^L(dx),\nonumber
 \\
    r_t(0,\eta) &= r_t+\lambda(t,\eta) , \label{dc2:2}
\end{align}
where \eqref{dc1:2} and \eqref{dc2:2} hold on $\{L_t \le \eta\}$, $\Q\otimes dt$-a.s.
\end{prop}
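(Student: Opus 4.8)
The plan is to obtain the statement as a direct specialisation of Theorem~\ref{thm1}: I would rewrite the model \eqref{def:r2} in the compensated form \eqref{def:f}/\eqref{dyn:r}, identify the quantities $\beta$, $F_t$ and the integrated jump coefficient that enter the general drift condition, apply Theorem~\ref{thm1}, and then simplify \eqref{dc1}.

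First I would check that (A1'), (A2)--(A4) and (A5') imply (A1)--(A5) for the combined random measure $\mu=\mu^L\otimes\tilde\mu$ on $\R_+\times E$, whose compensator acts as $\tilde F(dx)\,dt$ on the $\tilde\mu$--component and as $\nu_t^L(dx)\,dt$ on the $\mu^L$--component; the independence assumed in (A1') is precisely what makes this compensator additive, and the square--integrability required in (A5) follows from the two bounds in (A5'). The one non-automatic step is the identification of $\beta$ from \eqref{eq:tau}. Since the marks of $\tilde\mu$ leave $L$ unchanged, $\beta\equiv 0$ on the $\tilde\mu$--component; and since a jump of $\mu^L$ of size $x\in\cI$ turns $\ind{L_{t-}\le\eta}$ into $\ind{L_{t-}+x\le\eta}$, the representation \eqref{eq:tau} forces $\beta(t,\eta,x)=\ind{L_{t-}+x\le\eta}-1$, i.e. $1+\beta(t,\eta,x)=\ind{L_{t-}+x\le\eta}$, on the $\mu^L$--component. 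Hence $\lambda(t,\eta)=-\int_E\beta\,F_t(dx)=\int_{\cI}\ind{L_{t-}+x>\eta}\,\nu_t^L(dx)$, and (A1) holds.

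Next I would pass from the uncompensated dynamics \eqref{def:r2} to the compensated form used in Theorem~\ref{thm1}. Writing $\tilde\mu=\bar{\tilde\mu}+\tilde F\,dt$ and $\mu^L=\bar\mu^{L}+\nu_t^L\,dt$, the drift $\hat\alpha$ appearing in the compensated model \eqref{def:f} is related to the drift $\alpha$ of \eqref{def:r2} by
\[ \hat\alpha(t,T,\eta)=\alpha(t,T,\eta)+\int_G\gamma(t,T,\eta,x)\,\tilde F(dx)+\int_{\cI}\delta(t,T,\eta,x)\,\nu_t^L(dx), \]
while the compensated jump coefficient equals $\gamma$ on the $\tilde\mu$--component and $\delta$ on the $\mu^L$--component, so that the integrated coefficient $\Gamma$ of Theorem~\ref{thm1} coincides with the present $\Gamma$ on $G$ and with $\Delta$ on $\cI$. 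Theorem~\ref{thm1} then yields \eqref{dc2} verbatim, which with the $\lambda$ just computed is exactly \eqref{dc2:2}, together with \eqref{dc1} written for $\hat\alpha$.

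It remains to simplify \eqref{dc1}. Splitting its $E$--integral along the two components and using $1+\beta=1$ on $G$ and $1+\beta=\ind{L_{t-}+x\le\eta}$ on $\cI$ produces the two terms $-\int_G\gamma(e^{-\Gamma}-1)\tilde F(dx)$ and $-\int_{\cI}\delta(e^{-\Delta}\ind{L_{t-}+x\le\eta}-1)\nu_t^L(dx)$. Substituting the relation for $\hat\alpha$ and solving for $\alpha$, the two ``$-1$'' contributions cancel exactly against the compensator drifts $\int_G\gamma\,\tilde F$ and $\int_{\cI}\delta\,\nu_t^L$, collapsing the integrals to $-\int_G\gamma\,e^{-\Gamma}\tilde F(dx)$ and $-\int_{\cI}\delta\,\ind{L_{t-}+x\le\eta}\,e^{-\Delta}\nu_t^L(dx)$, which is exactly \eqref{dc1:2}; the localisation to $\{L_t\le\eta\}=\{\tau_\eta>t\}$ is inherited from Theorem~\ref{thm1}. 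The main obstacle is exactly this bookkeeping between the compensated and uncompensated formulations on the split mark space: getting $\beta$ right and observing that the ``$-1$'' terms annihilate the compensator drifts is what yields the simplified drift condition announced in the remark following \eqref{def:r2}; verifying that (A5') delivers the integrability demanded by (A5) for the combined measure is routine.
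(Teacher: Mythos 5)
Your proposal is correct and follows essentially the same route as the paper's proof: identify $\beta$ from the representation of $\ind{\tau_\eta>t}$ (so that $1+\beta=\ind{L_{t-}+x\le\eta}$ on the $\mu^L$-marks and $1+\beta=1$ on the $\tilde\mu$-marks), rewrite \eqref{def:r2} in compensated form so the drift becomes $\alpha+\int_G\gamma\,\tilde F(dx)+\int_{\cI}\delta\,\nu^L_t(dx)$, apply Theorem~\ref{thm1}, and cancel the ``$-1$'' contributions against the compensator drifts. The only cosmetic difference is that the paper encodes the two components via $x=(x_1,x_2)\in\cI\times G$ and $\ell_t(x)=x_1$ rather than splitting the mark space as a disjoint union, which leads to the same final cancellation.
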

\begin{proof}
Our aim is to apply Theorem \ref{thm1}.
We write  $x=(x_1,x_2)\in \cI\times G$ with $x_1 \in \cI$ and $x_2 \in G$. Then \eqref{def:L2} gives that
\begin{align*} 
 L_t = \int_0^t \int_E \ell_s(x) \mu(ds,dx), 
 \end{align*}
with $\ell_t(x):= x_1$ as $\mu=\mu^L \otimes \tilde \mu$. Note that, by definition $L$ takes its values in $\cI$.
Next, we need to obtain a representation of $\tau_\eta$ in terms of $L$.  
Note that   $\ind{\tau_\eta >t }= \ind{L_t \le \eta}$. Hence, by uniqueness of the Doob-Meyer decomposition, we obtain that
the compensators of this two processes must coincide, i.e.
\begin{align*} - \int_E \beta(t,\eta,x) F_t (dx) & = F_t ( \{x \in E: L_{t-} + \ell_t(x) > \eta \} ) \\
    &= \int_E \ind{L_{t-} + \ell_t(x) > \eta} F_t (dx).  
\end{align*}
This can be satisfied by choosing $\beta(t,\eta,x) := -\ind{L_{t-} + \ell_t(x) > \eta}$ such that
\begin{align}\label{tempbeta} 
1+ \beta(t,\eta,x) := -\ind{L_{t-} + \ell_t(x) \le \eta} . 
\end{align}

The next step is to derive the dynamics of $r$ given in \eqref{def:r2} in terms of \eqref{dyn:r}. In this regard we have that
\begin{align*}
d r_t &= \bigg( \frac{d}{d\xi} r_t + \alpha_t + \int_G \gamma_t(x) \tilde F(dx) + \int_I \delta_t(x) \nu^L_t(dx)\bigg) dt  + \sigma_t d W_t \\
& + \int_G \gamma_t(x) \,  (\tilde { \mu}(dt,dx) - \tilde F(dx) dt)  \\
& + \int_I \delta_t(x) ( \bar \mu^L(dt,dx)- \nu^L_t(dx)dt).
\end{align*}
A careful application of Theorem \ref{thm1} together with \eqref{tempbeta}  gives
\begin{align*}
     \lefteqn{ \alpha(t,T,\eta) + \int_G \gamma_t(T,\eta,x) \tilde F(dx) + \int_{\cI} \delta_t(T,\eta,x) \nu^L_t(dx)  }\qquad \\
       &= \sum_{j \in \mathbb{N}} \sigma^j (t,T,\eta) \Sigma^j(t,T,\eta) \nonumber\\
                        &- \int _{E} \gamma_t(T,\eta,x) \Big( e^{-\Gamma(t,T,\eta,x) }-1 \Big)  \tilde F(dx) \nonumber \\
                        &- \int _{\cI} \delta_t(T,\eta,x) \Big( e^{-\Delta(t,T,\eta,x) }\ind{L_{t-}+x\le \eta}-1 \Big)  \nu_t^L(dx)
\end{align*}
which yields \eqref{dc1:2} and we conclude.
\end{proof}

\subsection{A martingale problem}
The existence in our setting is a direct extension from  \citeN{FilipovicSchmidtOverbeck}, Theorem 5.1. However, the construction is important for the following results
and in this section we state the result in our setting.
We  assume that the stochastic basis satisfies:
\begin{itemize}
\item[{\bf(A6)}] $\Omega=\Omega_1\times\Omega_2$, $\Fcal=\Gcal\otimes\Hcal$,
$\Q(d\omega)=\Q_1(d\omega_1)\Q_2(\omega_1,d\omega_2)$, where
  $\omega=(\omega_1,\omega_2)\in\Omega$, and
$\Fcal_t=\Gcal_t\otimes \Hcal_t$,
  where
  \begin{enumerate}
  \item $(\Omega_1,\Gcal,(\Gcal_t),\Q_1)$ is some filtered probability
  space carrying the market information, in particular the Brownian
  motions $W^j(\omega)=W^j(\omega_1)$, $j \in \mathbb{N}$ and the Poisson random measure $\tilde \mu(\omega)=\tilde \mu(\omega_1)$,

  \item $(\Omega_2,\Hcal)$ is the canonical space of paths for $I$-valued increasing
  marked point processes endowed with the minimal filtration $(\Hcal_t)$:
  the generic $\omega_2\in\Omega_2$ is a c\`adl\`ag, increasing, piecewise
  constant function from $\R_+$ to $I$. Let
  \[L_t(\omega)=\omega_2(t) \] be the coordinate process.
  The filtration $(\Hcal_t)$ is therefore $\Hcal_t=\sigma(L_s\mid s\le t)$, and
  $\Hcal=\Hcal_\infty$,

  \item $\Q_2$ is a probability kernel from $(\Omega_1,\Gcal)$ to
  $\Hcal$ to be determined below.
  \end{enumerate}
\end{itemize}
Under assumption (A6), the volatility $\sigma_t(\omega)=\sigma_t(\omega_1,\omega_2)$, and the jump terms $\gamma_t(\omega;x)=\gamma_t(\omega_1,\omega_2;x)$ and $\delta_t(\omega;x )=\delta_t(\omega_1,\omega_2;x)$ in
{(A3)} and (A5') actually are functions of the loss path
$\omega_2$.  The evolution equation \eqref{def:r2} can thus be solved on
the stochastic basis $(\Omega_1,\Gcal,(\Gcal_t),\Q_1)$ along any
genuine loss path $\omega_2\in\Omega_2$. Indeed, the integral with
respect to $\mu$ in \eqref{def:r2} is path-wise in $\omega_2$.

Regarding condition \eqref{dc2:2}, note that under (A1'), the intensity
 $\lambda(t,\eta)$ uniquely determines $\nu^L(t,dx)$ via
  \begin{equation}\label{eqnudeflam}
    \nu^L(t,(0,\eta])=\lambda(t,L_t)-\lambda(t,L_t+\eta),\quad
    \eta\in I,
  \end{equation}
  where we denote $\lambda(t,x)=0$ for $x\ge 1$.
Then, condition \eqref{dc2:2} is
equivalent to
\begin{equation}\label{nudefdef}
  \nu^L(\omega;t,dx)=-r_t(\omega;0,\omega_2(t)+dx), \quad\text{(set
  $r_t(0,\eta)\equiv r_t$ for $\eta\ge 1$).}
\end{equation}
Hence, unless  $\delta$ is zero,
\[ \alpha_t(\xi,\eta)=\alpha_t(\xi,\eta,r_t) \]
in \eqref{dc1:2} becomes via \eqref{nudefdef} an explicit linear
functional of the (short end of the) prevailing spread curve. In
fact, there may result an implicit non-linear smooth dependence on
the entire prevailing spread curve $r_t$ via $\sigma$ and $\gamma$ in
\eqref{dc1:2}, respectively. But since this dependence on
$r_t$ is smooth, for any given loss path
$\omega_2\in\Omega_2$, equation \eqref{def:r2} will generically be
uniquely solvable.

It thus remains to find a probability kernel $\Q_2$ such that $\nu$
in \eqref{nudefdef} becomes the compensator of $L$. This is a
martingale problem for marked point processes, which has completely
been solved by  \citeN{Jacod75}. It turns out that $\Q_2$ exists
and is unique.

\begin{theorem}\label{thm2gen}
Assume {(A6)} holds. Let $r_0$, $\sigma_t, \gamma_t(x)$ and $\delta_t(x)$
satisfy {(A2)}, {(A4)} and {(A5')}, respectively. Define
$\nu^L(t,dx)$ by \eqref{nudefdef} and $\alpha_t$ by \eqref{dc1:2} for
all $(t,T,x)$.
Suppose, for any loss path $\omega_2\in\Omega_2$, there exists a
solution $r_t(\xi,\eta)$ of \eqref{def:r2} such that $r_t(0,\eta)$ is
progressive, decreasing and c\`adl\`ag in $\eta\in I$. Then
\begin{enumerate}
  \item\label{thm2gen1} {{(A3)}} is satisfied.

  \item\label{thm2gen2} There exists a unique probability
kernel $\Q_2$ from $(\Omega_1,\Gcal)$ to $\Hcal$, such that the loss
process $L_t(\omega)=\omega_2(t)$ satisfies {(A1')} and the
no-arbitrage condition \eqref{EMM} holds.
\end{enumerate}
\end{theorem}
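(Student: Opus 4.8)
The plan is to exploit the product structure in (A6): fix a loss path $\omega_2\in\Omega_2$, solve the evolution equation \eqref{def:r2} on the factor space $(\Omega_1,\Gcal,(\Gcal_t),\Q_1)$ using the assumed pathwise solution $r$, and only afterwards equip $\Omega_2$ with a law $\Q_2$ turning the coordinate process $L$ into a marked point process with the prescribed compensator. Concretely, I would first verify that the drift $\alpha$ defined by \eqref{dc1:2} satisfies (A3), and then construct $\Q_2$ from the martingale problem attached to the kernel $\nu^L$ of \eqref{nudefdef}; once $\Q_2$ is in place, (A1') and both drift conditions hold by construction, so that Proposition \ref{prop:2.1} delivers \eqref{EMM}. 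This follows the template of \citeN{FilipovicSchmidtOverbeck}, Theorem 5.1.

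For the first assertion, I would note that since $r_t(0,\eta)$ is decreasing and c\`adl\`ag in $\eta$, formula \eqref{nudefdef} defines for each $(t,\omega)$ a genuine finite measure on $\cI$ with total mass $\nu^L(t,\cI)=r_t(0,L_t)-r_t=\lambda(t,L_t)$, and progressivity of the short end makes $(t,\omega)\mapsto\nu^L(t,\cdot)$ a progressively measurable kernel. Every building block of \eqref{dc1:2} is then measurable: $\sigma,\gamma,\delta$ by (A4) and (A5'), the integrated quantities $\Sigma^j,\Gamma,\Delta$ by the stochastic Fubini theorem, and the factors $e^{-\Gamma}$, $e^{-\Delta}$ and $\ind{L_{t-}+x\le\eta}$ are measurable and locally bounded. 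This yields the $\Ocal\otimes\Bcal(\R_+)\otimes\Bcal(\cI)$-measurability required in (A3), while local integrability of $\alpha$ follows from the square-integrability bounds in (A4) and (A5'), the finiteness $\tilde F(G)<\infty$ and $\nu^L(t,\cI)<\infty$, and local boundedness of the exponential factors.

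For the second assertion, I would take $\nu^L$ as defined by \eqref{nudefdef}. By the previous step it is a $dt$-absolutely continuous kernel of finite total mass with no fixed times of discontinuity; by absolute continuity it serves as a predictable compensator and is therefore admissible for the martingale-problem theory of \citeN{Jacod75}. Jacod's existence and uniqueness result on the canonical space $(\Omega_2,\Hcal)$ then produces a unique probability kernel $\Q_2$ from $(\Omega_1,\Gcal)$ to $\Hcal$ under which $L$ has compensator $\nu^L(t,dx)\,dt$. Under $\Q_2$ the coordinate process $L$ is automatically c\`adl\`ag, non-decreasing, pure-jump and $\cI$-valued, and it carries the prescribed finite compensator, so (A1') holds. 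Relation \eqref{eqnudeflam} shows that the choice \eqref{nudefdef} is precisely the drift condition \eqref{dc2:2}, while $\alpha$ was defined through \eqref{dc1:2}; hence both drift conditions hold on $\{L_t\le\eta\}$, and Proposition \ref{prop:2.1} gives that $\Q$ is an ELMM, which is exactly \eqref{EMM}.

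The main obstacle is the joint measurability underlying the first assertion. Since the solution $r$ is produced separately along each loss path $\omega_2$, one must check that the family $\{r(\omega_1,\omega_2)\}_{\omega_2}$ assembles into a field that is jointly measurable in $(\omega_1,\omega_2,t)$, so that $\alpha$ and $\nu^L$ are well defined and measurable on the product space $\Omega=\Omega_1\times\Omega_2$. Here uniqueness of the mild solution is essential: it pins down the solution map $\omega_2\mapsto r$ unambiguously, and standard measurable-dependence-on-parameters arguments then supply the required measurability. A secondary, more routine point is to verify the admissibility hypotheses of \citeN{Jacod75} for $\nu^L$; these reduce to the predictability and finiteness already established.
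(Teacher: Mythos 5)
Your proposal is correct and takes essentially the same route as the paper: the paper's own proof is only the remark that it is analogous to Theorem 5.1 of Filipovi\'{c}, Overbeck and Schmidt (2011), and the discussion preceding the theorem lays out exactly your strategy --- solve \eqref{def:r2} pathwise in $\omega_2$ on $(\Omega_1,\Gcal,(\Gcal_t),\Q_1)$, read off $\nu^L$ from the short end via \eqref{nudefdef} (equivalently \eqref{dc2:2}), and invoke Jacod's martingale problem for marked point processes to obtain the unique kernel $\Q_2$, after which (A1') and the drift conditions hold by construction. Your additional remarks on joint measurability and the finiteness $\nu^L(t,\cI)=\lambda(t,L_t)$ are consistent with how the cited reference handles these points.
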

%
%
The proof is analogous to Theorem 5.1 in \citeN{FilipovicSchmidtOverbeck}.

\subsection{An SPDE approach}
The next step will be to state  conditions which guarantee the existence of solutions of a SPDE as in \eqref{def:r2}. More precisely, 
we consider the SPDE
\begin{align}\label{Musiela-CDO}
\left\{
\begin{array}{rcl}
dr_t & = & \big( \frac{d}{d\xi} r_t + \alpha(r_t) \big) dt + \sigma(r_t)dW_t 
\\ && + \int_G \gamma(r_{t-},x) \tilde{\mu}(dt,dx) + \int_I \delta(r_{t-},x) \mu^L(dt,dx) \medskip
\\ r_0 & = & h_0
\end{array}
\right.
\end{align}
with measurable mappings
$\sigma : H_{\beta} \rightarrow L_2^0(H_{\beta})$, $\gamma : H_{\beta} \times G \rightarrow H_{\beta}$ and $\delta : H_{\beta} \times I \rightarrow H_{\beta}$ on a suitable Hilbert space $H_{\beta}$ consisting of functions $h : \mathbb{R}_+ \times [0,1] \rightarrow \mathbb{R}$, which we shall now introduce. Let $\beta > 0$ be an arbitrary constant.


\begin{defin}
Let $H_{\beta}$ be the linear space consisting of all functions $h : \mathbb{R}_+ \times [0,1] \rightarrow \mathbb{R}$ satisfying the following conditions:
\begin{itemize}
\item For each $\xi \in \mathbb{R}_+$ the mapping $h(\xi,\cdot)$, $\partial_{\xi} h(\xi,\cdot)$ are absolutely continuous, and for each $\eta \in [0,1]$ the mappings $h(\cdot,\eta)$, $\partial_{\eta} h(\cdot,\eta)$ are absolutely continuous (and hence, almost everywhere differentiable).

\item We have almost everywhere $\partial_{\xi \eta} h = \partial_{\eta \xi} h$.

\item We have
\begin{equation}\label{def-norm-beta}
\begin{aligned}
\| h \|_{\beta} &:= \bigg( |h(0,0)|^2 + \int_0^{\infty} | \partial_{\xi} h(\xi,0) |^2 e^{\beta \xi} d\xi
+ \int_0^1 | \partial_{\eta} h(0,\eta) |^2 d\eta
\\ &\quad\quad\, + \int_0^{\infty} \int_0^1 | \partial_{\xi \eta} h(\xi,\eta) |^2 d\eta e^{\beta \xi} d\xi \bigg)^{1/2} < \infty.
\end{aligned}
\end{equation}
\end{itemize}
\end{defin}

In contrast to default-free term structure modelling, the SPDE (\ref{Musiela-CDO}) describes the dynamics of two-dimensional surfaces rather than curves, which is due to the additional parameter $\eta$ that describes the quality of bonds. In the context of default-free term structure modelling, similar spaces consisting of curves have been introduced in \citeN{Filipovic2001}. Let us collect some relevant properties of the spaces $H_{\beta}$. The proof of the following result can be provided by using similar techniques as in \citeN[Section~5]{Filipovic2001}, \citeN[Section~4]{Tappe2010}, \citeN[Appendix]{FilipovicTappeTeichmann2010} and \cite[Appendix~A]{Tappe2012b}, and it is therefore omitted.

\begin{theorem}\label{thm-group}
Let $\beta > 0$ be arbitrary.
\begin{enumerate}
\item The linear space $(H_{\beta},\| \cdot \|_{\beta})$ is a separable
Hilbert space.

\item The shift-semigroup $(S_t)_{t \geq 0}$ given by
\begin{align*}
S_t : H_{\beta} \rightarrow H_{\beta}, \quad S_t h(\xi,\eta) = h(\xi + t,\eta)
\end{align*}
is a $C_0$-semigroup on $H_{\beta}$ with infinitesimal generator $d / d\xi$.

\item There exist another separable Hilbert space $\mathcal{H}_{\beta}$, a $C_0$-group $(U_t)_{t \in \mathbb{R}}$ on $\mathcal{H}_{\beta}$ and continuous linear operators $\ell \in L(H_{\beta},\mathcal{H}_{\beta})$, $\pi \in L(\mathcal{H}_{\beta},H_{\beta})$ such that $\pi U_t \ell = S_t$ for all $t \in \mathbb{R}_+$.

\item The linear space
\begin{align*}
H_{\beta}^0 = \Big\{ h \in H_{\beta} : \lim_{\xi \rightarrow \infty} h(\xi,0) = 0 \text{ and } \lim_{\xi \rightarrow \infty} \partial_{\eta} h(\xi,\eta) = 0 \text{ for all } \eta \in [0,1] \Big\}
\end{align*}
is a closed subspace of $H_{\beta}$.

\item Each function $h \in H_{\beta}$ is continuous and bounded.

\item For all $(\xi,\eta) \in \R_+ \times [0,1]$ the point evaluation $h \mapsto h(\xi,\eta) : H_{\beta} \rightarrow \R$ is a continuous linear functional.

\item There is a constant $C_1 > 0$, only depending on $\beta$, such that for all $h \in H_{\beta}$ we have
\begin{align}\label{est-C4} 
\| h \|_{\infty} &\leq C_1 \| h \|_{\beta}.
\end{align}

\item For all $h \in H_{\beta}$ we have $\exp(h) \in H_{\beta}$, there are constants $C_2,C_3 > 0$, only depending on $\beta$, such that for all $h \in H_{\beta}$ we have
\begin{align}\label{exp-lin-growth}
\| \exp(h) \|_{\beta} \leq C_2 (1 + \| h \|_{\beta}) \exp(C_3 \| h \|_{\beta}),
\end{align}
and the mapping $h \mapsto \exp(h)$ is locally Lipschitz continuous on $H_{\beta}$.

\item For all $h,g \in H_{\beta}$ we have $hg \in H_{\beta}$ and the multiplication map $m : H_{\beta} \times H_{\beta} \rightarrow H_{\beta}$ defined as $m(h,g) := hg$ is a continuous, bilinear operator.

\item Let $\beta' > \beta$ be arbitrary. Then we have $H_{\beta'} \subset
H_{\beta}$ and the estimate
\begin{align*}
\| h \|_{\beta} \leq \| h \|_{\beta'}, \quad h \in H_{\beta'}.
\end{align*}
Moreover, for each $h \in H_{\beta'}^0$ we have $\mathcal{I} h \in H_{\beta}$, where
\begin{align*}
\mathcal{I} h (\xi,\eta) := \int_0^{\xi} h(\zeta,\eta) d\zeta, \quad (\xi,\eta) \in \mathbb{R}_+ \times [0,1],
\end{align*}
and the integral operator $\mathcal{I} : H_{\beta'}^0 \rightarrow H_{\beta}$ is a continuous linear operator.

\end{enumerate}
\end{theorem}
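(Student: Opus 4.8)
The plan is to establish each of the ten assertions of Theorem~\ref{thm-group} by reducing the two-dimensional surface setting to the well-understood one-dimensional curve spaces of \citeN{Filipovic2001}. The key structural observation is that the norm $\| \cdot \|_\beta$ in \eqref{def-norm-beta} decomposes $h$ into a boundary curve $\xi \mapsto h(\xi,0)$, a boundary curve $\eta \mapsto h(0,\eta)$, and the mixed second derivative $\partial_{\xi\eta} h$, which is an $L^2$-type object weighted by $e^{\beta \xi}$ in $\xi$ and by Lebesgue measure in $\eta$. I would first verify completeness (assertion~(1)): given a Cauchy sequence, the three components converge in their respective weighted $L^2$-spaces, and one reconstructs the limit function via the fundamental theorem of calculus using the identity $h(\xi,\eta) = h(0,0) + \int_0^\xi \partial_\zeta h(\zeta,0)\, d\zeta + \int_0^\eta \partial_\rho h(0,\rho)\, d\rho + \int_0^\xi \int_0^\eta \partial_{\zeta\rho} h(\zeta,\rho)\, d\rho\, d\zeta$; absolute continuity and the symmetry $\partial_{\xi\eta} h = \partial_{\eta\xi} h$ are preserved in the limit. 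Separability follows since each weighted $L^2$-factor is separable.

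For the semigroup statements (2)--(3), the shift acts only in the $\xi$-variable, so $(S_t)$ inherits strong continuity and generator $d/d\xi$ directly from the Filipovi\'c curve space, applied uniformly in $\eta$; I would check the $C_0$-property by dominated convergence against the $e^{\beta\xi}$ weight. The group extension in (3) is obtained exactly as in \citeN{Filipovic2001}: embed $H_\beta$ into a larger space $\mathcal{H}_\beta$ of functions on all of $\R$ (in the $\xi$-variable) on which the shift becomes an invertible group, with $\ell$ the embedding and $\pi$ a restriction operator, and the factorization $\pi U_t \ell = S_t$ holds by construction. The estimates (5)--(7) are the heart of the technical work: the sup-norm bound \eqref{est-C4} follows from writing $h(\xi,\eta)$ via the above integral representation and applying Cauchy--Schwarz to each term against the finite weights $\int_0^\infty e^{-\beta\xi} d\xi$ and the bounded $\eta$-interval $[0,1]$, giving continuity of point evaluation (6) and the closedness of $H_\beta^0$ in (4) as a byproduct.

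The pointwise multiplication and exponential statements (8)--(9) are the main obstacle, since they require controlling products of second mixed derivatives. For the product map, $\partial_{\xi\eta}(hg) = \partial_{\xi\eta} h \cdot g + \partial_\xi h \cdot \partial_\eta g + \partial_\eta h \cdot \partial_\xi g + h \cdot \partial_{\xi\eta} g$ by the Leibniz rule, and I would bound each of the four terms in the $\| \cdot \|_\beta$-norm using the sup-bound \eqref{est-C4} to move the lower-order factors into $L^\infty$; the weight $e^{\beta\xi}$ is respected because it is untouched by multiplication. Continuity and bilinearity then follow from these estimates. For the exponential, one differentiates $\exp(h)$ and applies the Leibniz-type expansion together with the chain rule, so that $\partial_{\xi\eta} \exp(h) = \exp(h)(\partial_{\xi\eta} h + \partial_\xi h\, \partial_\eta h)$, and the factor $\exp(h)$ is uniformly bounded by $\exp(C_1 \| h \|_\beta)$ thanks to \eqref{est-C4}; this directly yields the growth estimate \eqref{exp-lin-growth}, and local Lipschitz continuity comes from estimating $\exp(h) - \exp(g)$ via the mean value inequality in the same fashion.

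Finally, for (10) the nesting $H_{\beta'} \subset H_\beta$ with the norm inequality is immediate from $e^{\beta\xi} \le e^{\beta'\xi}$ for $\xi \ge 0$, and the mapping property of the integral operator $\mathcal{I}$ follows by computing $\partial_{\xi\eta}(\mathcal{I} h)(\xi,\eta) = \partial_\eta h(\xi,\eta)$ and $\partial_\xi(\mathcal{I}h)(\xi,0) = h(\xi,0)$, then estimating the resulting $\| \mathcal{I} h \|_\beta$ against $\| h \|_{\beta'}$; the requirement $h \in H_{\beta'}^0$ ensures the boundary terms at infinity vanish so that the integrals defining $\mathcal{I}h(0,\eta)$ and its $\eta$-derivative are finite. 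Throughout, I expect the symmetry-of-mixed-partials condition and the careful bookkeeping of which weight accompanies which variable to be the recurring technical points, but no genuinely new idea beyond the adaptation of \citeN{Filipovic2001} is needed, which is precisely why the authors omit the proof.
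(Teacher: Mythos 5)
The paper omits this proof entirely, deferring to the curve-space techniques of Filipovi\'c (2001) and the cited appendices, and your overall strategy --- reconstructing $h$ from the data $h(0,0)$, $\partial_{\xi}h(\cdot,0)$, $\partial_{\eta}h(0,\cdot)$, $\partial_{\xi\eta}h$ via the representation $h(\xi,\eta)=h(0,0)+\int_0^{\xi}\partial_{\zeta}h(\zeta,0)\,d\zeta+\int_0^{\eta}\partial_{\rho}h(0,\rho)\,d\rho+\int_0^{\xi}\int_0^{\eta}\partial_{\zeta\rho}h(\zeta,\rho)\,d\rho\,d\zeta$ and transferring the one-dimensional arguments factor by factor --- is exactly the intended route. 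There is, however, one step that fails as written. For the multiplication map you propose to bound the four Leibniz terms of $\partial_{\xi\eta}(hg)$ by ``moving the lower-order factors into $L^{\infty}$'' via \eqref{est-C4}. That works for $g\,\partial_{\xi\eta}h$ and $h\,\partial_{\xi\eta}g$, since \eqref{est-C4} bounds $\|h\|_{\infty}$ and $\|g\|_{\infty}$, but it does not apply to the cross term $\partial_{\xi}h\,\partial_{\eta}g$: the norm \eqref{def-norm-beta} controls $\partial_{\xi}h(\cdot,0)$ only in weighted $L^2$ and $\partial_{\eta}g(0,\cdot)$ only in $L^2([0,1])$, so first partial derivatives of $H_{\beta}$-functions need not be bounded and \eqref{est-C4} gives you nothing for either factor. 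The term is still controllable, but by a different (mixed-norm) argument: from $\partial_{\xi}h(\xi,\eta)=\partial_{\xi}h(\xi,0)+\int_0^{\eta}\partial_{\xi\rho}h(\xi,\rho)\,d\rho$ one gets $\int_0^{\infty}\sup_{\eta}|\partial_{\xi}h(\xi,\eta)|^2e^{\beta\xi}\,d\xi\leq 2\|h\|_{\beta}^2$, from $\partial_{\eta}g(\xi,\eta)=\partial_{\eta}g(0,\eta)+\int_0^{\xi}\partial_{\zeta\eta}g(\zeta,\eta)\,d\zeta$ one gets $\int_0^1\sup_{\xi}|\partial_{\eta}g(\xi,\eta)|^2\,d\eta\lesssim\|g\|_{\beta}^2$, and combining the two by Cauchy--Schwarz bounds $\|\partial_{\xi}h\,\partial_{\eta}g\|$ in $L^2(e^{\beta\xi}d\xi\,d\eta)$ by a constant times $\|h\|_{\beta}\|g\|_{\beta}$. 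The identical repair is needed for the term $\partial_{\xi}h\,\partial_{\eta}h$ in $\partial_{\xi\eta}\exp(h)=\exp(h)(\partial_{\xi\eta}h+\partial_{\xi}h\,\partial_{\eta}h)$; it is precisely this quadratic contribution $\|h\|_{\beta}^2$ that gets absorbed into the exponential to produce the stated form of \eqref{exp-lin-growth}.

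A smaller imprecision of the same kind: the closedness of $H_{\beta}^0$ is not a byproduct of \eqref{est-C4} alone, because the second defining condition concerns $\lim_{\xi\to\infty}\partial_{\eta}h(\xi,\eta)$, and uniform convergence of $h_n\to h$ says nothing about $\partial_{\eta}h_n$. You need the separate tail estimate coming from $\partial_{\eta}h(\xi,\eta)=\partial_{\eta}h(0,\eta)+\int_0^{\xi}\partial_{\zeta\eta}h(\zeta,\eta)\,d\zeta$ together with the exponential weight, which shows that the limits at infinity exist and depend continuously on $h$ (in an $L^2(d\eta)$ sense), so that $H_{\beta}^0$ is an intersection of kernels of continuous maps. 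Everything else in your sketch --- completeness and separability, the shift semigroup and its group extension, the sup-norm bound, the nesting $H_{\beta'}\subset H_{\beta}$, and the mapping property of $\mathcal{I}$ on $H_{\beta'}^0$ using the decay of $h(\xi,0)$ and $\partial_{\eta}h(\xi,\cdot)$ at infinity --- is correct and matches what the cited references do.
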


In particular, we see that $H_{\beta}$ is a separable Hilbert space and that the shift semigroup $(S_t)_{t \geq 0}$ is a $C_0$-semigroup on $H_{\beta}$ with infinitesimal generator $d / d\xi$. In order to provide our existence- and uniqueness result, we impose the following conditions.

\begin{description}
\item[(A7)]
Let $\beta' > \beta$ be another constant. We assume that:
\begin{itemize}
\item $\sigma(H_{\beta}) \subset L_2^0(H_{\beta'}^0)$ and $\gamma(H_{\beta} \times G), \delta(H_{\beta} \times I) \subset H_{\beta'}^0$.

\item There is a sequence $(c^j)_{j \in \mathbb{N}}$ with $\sum_{j \in \mathbb{N}} (c^j)^2 < \infty$ such that for all $j \in \mathbb{N}$ we have
\begin{align}\label{sigma-cond-1}
\| \sigma^j(h) \|_{\beta'} &\leq c^j, \quad h \in H_{\beta},
\\ \label{sigma-cond-2} \| \sigma^j(h_1) - \sigma^j(h_2) \|_{\beta'} &\leq c^j \| h_1 - h_2 \|_{\beta}, \quad h_1,h_2 \in H_{\beta}.
\end{align}
\item There is a constant $M > 0$ such that for all $x \in G$ we have
\begin{align}\label{gamma-cond-1}
\| \gamma(h,x) \|_{\beta'} &\leq M, \quad h \in H_{\beta},
\\ \label{gamma-cond-2} \| \gamma(h_1,x) - \gamma(h_2,x) \|_{\beta'} &\leq M \| h_1 - h_2 \|_{\beta}, \quad h_1,h_2 \in H_{\beta},
\end{align}
and for all $x \in I$ we have
\begin{align}\label{delta-cond-1}
\| \delta(h,x) \|_{\beta'} &\leq M, \quad h \in H_{\beta},
\\ \label{delta-cond-2} \| \delta(h_1,x) - \delta(h_2,x) \|_{\beta'} &\leq M \| h_1 - h_2 \|_{\beta}, \quad h_1,h_2 \in H_{\beta}.
\end{align}
\end{itemize}
\end{description}

In view of Proposition~\ref{prop:2.1}, we suppose that the drift term
$\alpha : \Omega_2 \times \mathbb{R}_+ \times H_{\beta} \rightarrow H_{\beta}$ in the SPDE (\ref{Musiela-CDO}) is given by
\begin{equation}\label{SPDE:dc}
\begin{aligned}
\alpha(\omega_2,t,h)(\xi,\eta) &= \sum_{j \in \mathbb{N}} \sigma^j(h)(\xi,\eta) \Sigma^j(h)(\xi,\eta) 
\\ &\quad - \int_G \gamma(h,x)(\xi,\eta) e^{-\Gamma(h,x)(\xi,\eta)} \tilde{F}(dx)
\\ &\quad - \int_I \Ind_{ \{ \omega_2(t-) + x \leq \eta \} } \delta(h,x)(\xi,\eta) e^{- \Delta(h,x)(\xi,\eta)} h(0,\omega_2(t)+dx). 
\end{aligned}
\end{equation}
We will prove the following existence- and uniqueness result:

\begin{theorem}\label{4.3}
For each $h_0 \in H_{\beta}$ and each $\omega_2 \in \Omega_2$ there exists a unique mild solution $r = r(\cdot,\omega_2) : \Omega_1 \times \mathbb{R}_+ \rightarrow H_{\beta}$ to the SPDE
\begin{align}\label{SPDE-omega-2}
\left\{
\begin{array}{rcl}
dr_t & = & \big( \frac{d}{d\xi} r_t + \alpha(\omega_2,t,r_t) \big) dt + \sigma(r_t) dW_t \\ && + \int_G \gamma(r_{t-},x) \tilde{\mu}(dt,dx) + \int_I \delta(r_{t-},x) \mu^{\omega_2}(dt,dx) \medskip
\\ r_0 & = & h_0
\end{array}
\right.
\end{align}
on the filtered probability space $(\Omega_1,\mathcal{G},(\mathcal{G}_t)_{t \geq 0},\mathbb{Q}_1)$.
\end{theorem}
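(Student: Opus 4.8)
The plan is to reduce the statement to the abstract existence- and uniqueness result for SPDEs driven by Wiener processes and Poisson random measures of \citeN{Tappe2012a}. It provides a unique \cadlag\ mild solution as soon as the differential operator $d/d\xi$ generates a $C_0$-semigroup---which is part of Theorem~\ref{thm-group}---and the coefficients $\alpha(\omega_2,t,\cdot)$, $\sigma$, $\gamma(\cdot,x)$ and $\delta(\cdot,x)$ are measurable, of linear growth and locally Lipschitz continuous on $H_\beta$, uniformly in $t$ on compacts and, for the jump coefficients, square-integrable against the relevant compensators. I fix $\omega_2 \in \Omega_2$. Since $\omega_2$ is \cadlag, increasing and piecewise constant, on every finite horizon $\mu^{\omega_2}$ is a finite deterministic point measure and $t \mapsto (\omega_2(t),\omega_2(t-))$ is piecewise constant, so measurability in $t$ is clear and all constants below may be chosen uniform in $t \in [0,T]$. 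The coefficients $\sigma,\gamma,\delta$ cause no difficulty: the embedding $H_{\beta'} \subset H_\beta$ with $\| \cdot \|_\beta \le \| \cdot \|_{\beta'}$ of Theorem~\ref{thm-group} turns \eqref{sigma-cond-1}--\eqref{sigma-cond-2} into $\sum_j \| \sigma^j(h) \|_\beta^2 \le \sum_j (c^j)^2 < \infty$ together with a global Lipschitz bound, while \eqref{gamma-cond-1}--\eqref{delta-cond-2} and $\tilde F(G)<\infty$ yield boundedness, Lipschitz continuity and square-integrability of $\gamma$ and $\delta$.

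The heart of the matter is to show that the drift $\alpha$ of \eqref{SPDE:dc} defines a locally Lipschitz map of linear growth into $H_\beta$, which I would check summand by summand using the calculus assembled in Theorem~\ref{thm-group}. In the first summand, $\Sigma^j(h) = \mathcal{I}\sigma^j(h)$ belongs to $H_\beta$ since $\sigma^j(h) \in H_{\beta'}^0$ and the integral operator $\mathcal{I} : H_{\beta'}^0 \to H_\beta$ is continuous; the products $\sigma^j(h)\Sigma^j(h)$ lie in $H_\beta$ by continuity of the multiplication map, and as both factors are bounded by a multiple of $c^j$ and Lipschitz with constant a multiple of $c^j$, one obtains $\| \sigma^j(h_1)\Sigma^j(h_1) - \sigma^j(h_2)\Sigma^j(h_2) \|_\beta \le C(c^j)^2 \| h_1 - h_2 \|_\beta$, so the series converges and is globally Lipschitz. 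In the second summand, $\Gamma(h,x) = \mathcal{I}\gamma(h,x)$ is bounded in $\| \cdot \|_\beta$ uniformly in $(h,x)$; hence the exponential estimate \eqref{exp-lin-growth} renders $\exp(-\Gamma(h,x))$ uniformly bounded and Lipschitz on the relevant range, the product $\gamma(h,x)\exp(-\Gamma(h,x)) \in H_\beta$ is bounded and Lipschitz uniformly in $x$, and integration against the finite measure $\tilde F$ preserves these properties.

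The main obstacle is the third summand, for two reasons. First, for fixed $x$ its integrand carries the factor $\ind{\omega_2(t-)+x\le\eta}$, which jumps in $\eta$ and hence does \emph{not} lie in $H_\beta$; membership in $H_\beta$ is recovered only after integration in $x$, since the indicator converts the $x$-integral into one with an $\eta$-dependent upper limit, which is absolutely continuous in $\eta$. Second, the integration is against the curve-dependent measure $h(0,\omega_2(t)+dx)$, which by absolute continuity of $h(0,\cdot)$ and the definition \eqref{def-norm-beta} of the norm equals $\partial_\eta h(0,\omega_2(t)+x)\,dx$ with $\partial_\eta h(0,\cdot) \in L^2([0,1])$ and $\| \partial_\eta h(0,\cdot) \|_{L^2} \le \| h \|_\beta$. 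Thus $\alpha$ depends on $h$ both through the bounded, Lipschitz kernel $\delta(h,x)\exp(-\Delta(h,x))$ and through the \emph{unbounded} linear factor $h \mapsto \partial_\eta h(0,\cdot)$; their product is only locally Lipschitz and of linear growth, which is exactly why the abstract theorem must be used in its locally-Lipschitz form. To estimate the $H_\beta$-norm I would differentiate the $x$-integral with its moving limit in $\eta$, producing an interior term and a boundary term (the kernel evaluated at $x=\eta-\omega_2(t-)$), and control both by the uniform bounds on $\delta(h,\cdot)$ and $\exp(-\Delta(h,\cdot))$ together with $\| \partial_\eta h(0,\cdot) \|_{L^2} \le \| h \|_\beta$ and $|I|=1$; differences in $h$ are handled by splitting into the Lipschitz variation of the kernel times $\| h \|_\beta$ and the bounded kernel times the linear difference of the densities. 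With $\alpha$, $\sigma$, $\gamma$ and $\delta$ verified, the abstract theorem delivers the unique \cadlag\ mild solution $r(\cdot,\omega_2)$ on $(\Omega_1,\cG,(\cG_t)_{t\ge 0},\Q_1)$, which finishes the proof.
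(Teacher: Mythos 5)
Your proposal is correct and follows essentially the same route as the paper: the summand-by-summand verification of local Lipschitz continuity and linear growth of $\alpha$ (including the moving-limit differentiation with its boundary term for the third summand, and the splitting of differences into kernel variation times $\|h\|_\beta$ plus bounded kernel applied to $h_1-h_2$) reproduces Lemmas~\ref{lemma-alpha-1}--\ref{lemma-local-essential} and Proposition~\ref{prop-alpha-3}, after which the citation of \citeN[Theorem~4.5]{Tappe2012a} matches the paper's conclusion. The only cosmetic difference is that the paper invokes the $C_0$-group dilation $\pi U_t \ell = S_t$ from Theorem~\ref{thm-group}(3) rather than just semigroup generation, but since you cite Theorem~\ref{thm-group} this is covered.
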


In order to prepare the proof of Theorem~\ref{4.3}, note that the drift term (\ref{SPDE:dc}) can be expressed as
\begin{align}
\alpha(\omega_2,t,h) = \alpha^1(h) + \alpha^2(h) + \alpha_{(\omega_2,t)}^3(h).
\end{align}
where we have set
\begin{align*}
\alpha^1(h) &:= \sum_{j \in \mathbb{N}} \sigma^j(h) \mathcal{I} \sigma^j(h),
\\ \alpha^2(h) &:= -\int_E \gamma(h,x) \exp(-\mathcal{I} \gamma(h,x)) \tilde{F}(dx),
\\ \alpha_{(\omega_2,t)}^3(h)(\xi,\eta) &:= -\int_0^{f_{(\omega_2,t)}(\eta)} \delta(h,x)(\xi,\eta) \exp(-\mathcal{I} \delta(h,x)(\xi,\eta)) \partial_x h(0,\omega_2(t)+x) dx,
\end{align*}
and where for each $(\omega_2,t) \in \Omega_2 \times \mathbb{R}_+$ the piecewise linear function $f_{(\omega_2,t)} : \mathbb{R}_+ \rightarrow [0,1]$ is defined as
\begin{align*}
f_{(\omega_2,t)}(\eta) :=
\begin{cases}
0, & \text{if $\eta - \omega_2(t-) \leq 0$,}
\\ \eta-\omega_2(t-), & \text{if $0 \leq \eta - \omega_2(t-) \leq 1 - \omega_2(t)$,}
\\ 1-\omega_2(t), & \text{if $\eta - \omega_2(t-) \geq 1 - \omega_2(t)$.}
\end{cases}
\end{align*}
Now, our goal is to show that $\alpha$ is locally Lipschitz and satisfies the linear growth condition. For this purpose, we prepare a few auxiliary results.

\begin{lem}\label{lemma-alpha-1}
The following statements are true:
\begin{enumerate}
\item We have $\alpha^1(H_{\beta}) \subset H_{\beta}$.

\item The mapping $\alpha^1 : H_{\beta} \rightarrow H_{\beta}$ is Lipschitz continuous.
\end{enumerate}
\end{lem}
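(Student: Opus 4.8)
The plan is to treat $\alpha^1$ as an absolutely convergent series in the Banach space $H_{\beta}$, exploiting the two gains in regularity recorded in Theorem~\ref{thm-group}: that the multiplication map $m(h,g)=hg$ is a continuous bilinear operator on $H_{\beta}$ (part~(9)), and that the integral operator $\mathcal I$ maps $H_{\beta'}^0$ continuously into $H_{\beta}$ (part~(10)), together with the domination $\|h\|_{\beta} \le \|h\|_{\beta'}$. Write $C_m$ for the operator norm of $m$ and $C_{\mathcal I}$ for that of $\mathcal I : H_{\beta'}^0 \to H_{\beta}$. By (A7) each $\sigma^j(h)$ lies in $H_{\beta'}^0$, so $\mathcal I \sigma^j(h) \in H_{\beta}$, and since $H_{\beta'}^0 \subset H_{\beta}$ the product $\sigma^j(h)\,\mathcal I\sigma^j(h)$ is a well-defined element of $H_{\beta}$.

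For the inclusion $\alpha^1(H_{\beta}) \subset H_{\beta}$ I would estimate a single summand by chaining the three bounds,
$$\| \sigma^j(h)\,\mathcal I \sigma^j(h) \|_{\beta} \le C_m \|\sigma^j(h)\|_{\beta} \,\|\mathcal I\sigma^j(h)\|_{\beta} \le C_m C_{\mathcal I}\,\|\sigma^j(h)\|_{\beta'}^2 \le C_m C_{\mathcal I} (c^j)^2,$$
using \eqref{sigma-cond-1} in the last step. Summing over $j$ and invoking $\sum_{j} (c^j)^2 < \infty$ shows the defining series for $\alpha^1(h)$ is absolutely convergent in $H_{\beta}$, hence $\alpha^1(h) \in H_{\beta}$.

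For Lipschitz continuity I would apply the standard add-and-subtract decomposition to each summand, using the linearity of $\mathcal I$:
$$\sigma^j(h_1)\mathcal I\sigma^j(h_1) - \sigma^j(h_2)\mathcal I\sigma^j(h_2) = \big(\sigma^j(h_1)-\sigma^j(h_2)\big)\mathcal I\sigma^j(h_1) + \sigma^j(h_2)\,\mathcal I\big(\sigma^j(h_1)-\sigma^j(h_2)\big).$$
Bounding both pieces exactly as above, now using the uniform bound \eqref{sigma-cond-1} on one factor and the Lipschitz estimate \eqref{sigma-cond-2} on the other, each summand is controlled by $2\,C_m C_{\mathcal I}(c^j)^2 \|h_1 - h_2\|_{\beta}$. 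Summing over $j$ yields $\|\alpha^1(h_1)-\alpha^1(h_2)\|_{\beta} \le 2\,C_m C_{\mathcal I}\big(\sum_{j} (c^j)^2\big)\|h_1-h_2\|_{\beta}$, which is the claimed Lipschitz estimate.

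The argument is essentially bookkeeping once the right norms are chosen; the only point requiring care, and the crux of why the statement holds, is that integration costs one derivative, so $\mathcal I\sigma^j(h)$ need not lie in $H_{\beta'}$ but does land in the weaker space $H_{\beta}$, while $\sigma^j(h)$ itself must be measured in the stronger $\|\cdot\|_{\beta'}$-norm in order to apply the summable bounds from (A7). Keeping the two indices $\beta' > \beta$ straight throughout is the main thing to watch.
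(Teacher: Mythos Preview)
Your proof is correct and follows essentially the same approach as the paper: both use the continuity of the bilinear multiplication $m$ and the integral operator $\mathcal I$ from Theorem~\ref{thm-group}, together with the summable bounds \eqref{sigma-cond-1}--\eqref{sigma-cond-2}, and both handle the Lipschitz part via the same add-and-subtract decomposition of the bilinear term (the paper merely groups the two pieces in the opposite order). Your closing remark about why one must measure $\sigma^j(h)$ in $\|\cdot\|_{\beta'}$ while $\mathcal I\sigma^j(h)$ only lands in $H_\beta$ is exactly the point.
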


\begin{proof}
According to Theorem~\ref{thm-group}, the multiplication $m : H_{\beta} \times H_{\beta} \rightarrow H_{\beta}$ is a continuous bilinear operator, the integral operator $\mathcal{I} : H_{\beta'}^0 \rightarrow H_{\beta}$ is a continuous linear operator, and we have $H_{\beta'} \subset H_{\beta}$ with $\| h \|_{\beta} \leq \| h \|_{\beta'}$ for all $h \in H_{\beta'}$. Thus, by (\ref{sigma-cond-1}), for all $h \in H_{\beta}$ we have
\begin{align*}
\| \alpha^1(h) \|_{\beta} &= \bigg\| \sum_{j \in \mathbb{N}} \sigma^j(h) \mathcal{I} \sigma^j(h) \bigg\|_{\beta} \leq \sum_{j \in \mathbb{N}} \| \sigma^j(h) \mathcal{I} \sigma^j(h) \|_{\beta}
\\ &\leq \| m \| \, \| \mathcal{I} \| \sum_{j \in \mathbb{N}} \| \sigma^j(h) \|_{\beta'}^2 \leq  \| m \| \, \| \mathcal{I} \| \sum_{j \in \mathbb{N}} (c^j)^2 < \infty,
\end{align*}
showing that $\alpha^1(H_{\beta}) \subset H_{\beta}$. Moreover, by (\ref{sigma-cond-1}), (\ref{sigma-cond-2}), for all $h_1,h_2 \in H_{\beta}$ we obtain
\begin{align*}
&\| \alpha^1(h_1) - \alpha^1(h_2) \|_{\beta} \leq \sum_{j \in \mathbb{N}} \| \sigma^j(h_1) \mathcal{I} \sigma^j(h_1) - \sigma^j(h_2) \mathcal{I} \sigma^j(h_2) \|_{\beta}
\\ &\leq \sum_{j \in \mathbb{N}} \| \sigma^j(h_1) (\mathcal{I} \sigma^j(h_1) - \mathcal{I} \sigma^j(h_2)) \|_{\beta} + \sum_{j \in \mathbb{N}} \| (\sigma^j(h_1) - \sigma^j(h_2)) \mathcal{I} \sigma^j(h_2) \|_{\beta}
\\ &\leq \| m \| \, \| \mathcal{I} \| \sum_{j \in \mathbb{N}} \| \sigma^j(h_1) \|_{\beta'} \| \sigma^j(h_1) - \sigma^j(h_2) \|_{\beta'} 
\\ &\quad + \| m \| \, \| \mathcal{I} \| \sum_{j \in \mathbb{N}} \| \sigma^j(h_1) - \sigma^j(h_2) \|_{\beta'} \| \sigma^j(h_2) \|_{\beta'}
\\ &\leq 2 \| m \| \, \| \mathcal{I} \| \bigg( \sum_{j \in \mathbb{N}} (c^j)^2 \bigg) \| h_1 - h_2 \|_{\beta},
\end{align*}
showing that $\alpha^1$ is Lipschitz continuous.
\end{proof}

\begin{lem}\label{lemma-alpha-2}
The following statements are true:
\begin{enumerate}
\item We have $\alpha^2(H_{\beta}) \subset H_{\beta}$.

\item The mapping $\alpha^2 : H_{\beta} \rightarrow H_{\beta}$ is Lipschitz continuous.
\end{enumerate}
\end{lem}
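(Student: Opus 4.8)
The plan is to mirror the structure of the proof of Lemma~\ref{lemma-alpha-1}, treating $\alpha^2(h) = -\int_G \gamma(h,x)\exp(-\mathcal{I}\gamma(h,x))\,\tilde F(dx)$ as a composition of the operations whose good behaviour on $H_\beta$ is guaranteed by Theorem~\ref{thm-group}, together with the uniform bounds on $\gamma$ in (A7). I would first fix $x \in G$ and show that the integrand $h \mapsto \gamma(h,x)\exp(-\mathcal{I}\gamma(h,x))$ maps $H_\beta$ into $H_\beta$ with a norm bound uniform in $x$. Indeed, by (A7) we have $\gamma(h,x) \in H_{\beta'}^0$, so $\mathcal{I}\gamma(h,x) \in H_\beta$ by the last item of Theorem~\ref{thm-group}; then $\exp(-\mathcal{I}\gamma(h,x)) \in H_\beta$ by item~(8), and the product lies in $H_\beta$ by the multiplication map of item~(9). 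For the norm bound I would use $\|\gamma(h,x)\|_{\beta'} \leq M$ from (\ref{gamma-cond-1}), the continuity $\|\mathcal{I}\|$ of the integral operator, the exponential growth estimate (\ref{exp-lin-growth}), and the embedding $\|\cdot\|_\beta \leq \|\cdot\|_{\beta'}$, yielding a bound depending only on $M$, $\|m\|$, $\|\mathcal{I}\|$ and the constants $C_2,C_3$. Since $\tilde F(G) < \infty$ by (A1'), integrating the uniform bound over $G$ gives $\|\alpha^2(h)\|_\beta < \infty$, proving part~(1).

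For the Lipschitz estimate in part~(2) I would, for fixed $x$, add and subtract a cross term to split the difference of products:
\begin{align*}
\gamma(h_1,x)e^{-\mathcal{I}\gamma(h_1,x)} - \gamma(h_2,x)e^{-\mathcal{I}\gamma(h_2,x)}
&= \big(\gamma(h_1,x)-\gamma(h_2,x)\big)e^{-\mathcal{I}\gamma(h_1,x)} \\
&\quad + \gamma(h_2,x)\big(e^{-\mathcal{I}\gamma(h_1,x)} - e^{-\mathcal{I}\gamma(h_2,x)}\big).
\end{align*}
The first term is controlled by the Lipschitz bound (\ref{gamma-cond-2}) on $\gamma$ together with the uniform exponential bound, while the second term uses the uniform bound (\ref{gamma-cond-1}) on $\gamma(h_2,x)$ and the local Lipschitz continuity of $h \mapsto \exp(h)$ from item~(8), composed with the Lipschitz map $\mathcal{I}$ and the bound (\ref{gamma-cond-2}). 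Throughout I would invoke continuity of the multiplication operator $m$ to pass from products of norms to norms of products. Each piece is bounded by a constant (uniform in $x$) times $\|h_1-h_2\|_\beta$, and integrating over $G$ against the finite measure $\tilde F$ preserves the Lipschitz bound.

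The main obstacle I anticipate is not any single step but making the estimate on $h \mapsto \exp(-\mathcal{I}\gamma(h,x))$ genuinely \emph{uniform} in $x \in G$. The local Lipschitz constant of the exponential in Theorem~\ref{thm-group}(8) depends on the size of the arguments, so one must first establish that $\{\mathcal{I}\gamma(h,x) : x \in G,\ \|h\|_\beta \leq R\}$ lies in a fixed ball of $H_\beta$, using the $x$-uniform bound (\ref{gamma-cond-1}) and the boundedness of $\mathcal{I}$; only then does the exponential step yield an $x$-free Lipschitz constant on bounded sets. Once this uniformity is secured, finiteness of $\tilde F(G)$ makes the interchange of integration and the norm estimates routine, and the two assertions follow exactly as in Lemma~\ref{lemma-alpha-1}.
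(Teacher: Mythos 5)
Your proposal is correct and follows essentially the same route as the paper: bound the integrand uniformly in $x$ via (\ref{gamma-cond-1}), the continuity of $\mathcal{I}$ and $m$, and the estimate (\ref{exp-lin-growth}), then integrate against the finite measure $\tilde F$; for the Lipschitz part, split the difference of products with a cross term and use that $-\mathcal{I}\gamma(H_\beta \times G)$ lies in a fixed ball so the local Lipschitz constant of $h \mapsto \exp(h)$ can be chosen uniformly in $x$. The only cosmetic difference is that you attach the exponential difference to $\gamma(h_2,x)$ while the paper attaches it to $\gamma(h_1,x)$, which changes nothing.
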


\begin{proof}
According to Theorem~\ref{thm-group}, the multiplication $m : H_{\beta} \times H_{\beta} \rightarrow H_{\beta}$ is a continuous bilinear operator, the integral operator $\mathcal{I} : H_{\beta'}^0 \rightarrow H_{\beta}$ is a continuous linear operator, and we have $H_{\beta'} \subset H_{\beta}$ with $\| h \|_{\beta} \leq \| h \|_{\beta'}$ for all $h \in H_{\beta'}$. Thus, by estimates (\ref{exp-lin-growth}) and (\ref{gamma-cond-1}), for all $h \in H_{\beta}$ we have
\begin{align*}
\| \alpha^2(h) \|_{\beta} &= \bigg\| \int_G \gamma(h,x) \exp(-\mathcal{I} \gamma(h,x)) \tilde F(dx) \bigg\|_{\beta} 
\\ &\leq \int_G \| \gamma(h,x) \exp(-\mathcal{I} \gamma(h,x)) \|_{\beta} \tilde F(dx)
\\ &\leq \| m \| \| C_2 \int_G \| \gamma(h,x) \|_{\beta} (1 + \| \mathcal{I }\gamma(h,x) \|_{\beta}) \exp(C_3 \| \mathcal{I} \gamma(h,x) \|_{\beta}) \tilde F(dx)
\\ &\leq \| m \| C_2 \int_G \| \gamma(h,x) \|_{\beta'} (1 + \| \mathcal{I} \| \, \| \gamma(h,x) \|_{\beta'}) \exp(C_3 \| \mathcal{I}  \| \, \| \gamma(h,x) \|_{\beta'}) \tilde F(dx)
\\  &\leq \| m \| C_2 M(1 + \| \mathcal{I} \| M) \exp(C_3 \| \mathcal{I} \| M) \tilde F(G) < \infty,
\end{align*}
showing that $\alpha^2(H_{\beta}) \subset H_{\beta}$. Moreover, by Theorem~\ref{thm-group} the mapping $h \mapsto \exp(h)$ is locally Lipschitz, and hence, in view of (\ref{gamma-cond-1}), there exists a constant $L \geq 0$ such that
\begin{align*}
\| \exp(h_1) - \exp(h_2) \|_{\beta} \leq L \| h_1 - h_2 \|_{\beta} \quad \text{for all $h_1,h_2 \in - \mathcal{I} \gamma(H_{\beta} \times G)$.}
\end{align*}
Therefore, by estimates (\ref{exp-lin-growth}) and (\ref{gamma-cond-1}), (\ref{gamma-cond-2}), for all $h_1,h_2 \in H_{\beta}$ we obtain
\begin{align*}
&\| \alpha^2(h_1) - \alpha^2(h_2) \|_{\beta}
\\ &\leq \int_G \| \gamma(h_1,x) ( \exp(-\mathcal{I} \gamma(h_1,x)) - \exp(-\mathcal{I} \gamma(h_2,x)) \|_{\beta} \tilde F(dx)
\\ &\quad + \int_G \| (\gamma(h_1,x) - \gamma(h_2,x))\exp(-\mathcal{I} \gamma(h_2,x)) \|_{\beta} \tilde F(dx)
\\ &\leq \int_G \| \gamma(h_1,x) \|_{\beta'} L \| \mathcal{I} \gamma(h_1,x) - \mathcal{I}\gamma(h_2,x) \|_{\beta'} \tilde F(dx) 
\\ &\quad + \int_G M \| h_1 - h_2 \|_{\beta} C_2 (1 + \| \mathcal{I} \gamma(h_2,x) \|_{\beta}) \exp(C_3 \| \mathcal{I} \gamma(h_2,x) \|_{\beta}) \tilde F(dx)
\\ &\leq \big( M^2 L \| \mathcal{I} \| + M C_2(1 + \| \mathcal{I} \| M) \exp(C_3 \| \mathcal{I} \| M) \big) \tilde F(G) \| h_1 - h_2 \|_{\beta},
\end{align*}
showing that $\alpha^2$ is Lipschitz continuous.
\end{proof}

In order to treat the mapping $\alpha^3$, we prepare a separate auxiliary result. For $(\omega_2,t) \in \Omega_2 \times \mathbb{R}_+$ and any bounded, measurable mapping $\epsilon : I \rightarrow H_{\beta}$ we define
\begin{align*}
\alpha_{(\omega_2,t)}^{\epsilon}(h)(\xi,\eta) := \int_0^{f_{(\omega_2,t)}(\eta)} \epsilon(x)(\xi,\eta) \partial_x h(0,\omega_2(t)+x) dx
\end{align*}
for $h \in H_{\beta}$ and $(\xi,\eta) \in \mathbb{R}_+ \times [0,1]$.

\begin{lem}\label{lemma-local-essential}
The following statements are true:
\begin{enumerate}
\item For all $(\omega_2,t) \in \Omega_2 \times \mathbb{R}_+$ and any bounded, measurable mapping $\epsilon : I \rightarrow H_{\beta}$ we have $\alpha_{(\omega_2,t)}^{\epsilon}(H_{\beta}) \subset H_{\beta}$. 

\item There exists a constant $N > 0$ such that for all $(\omega_2,t) \in \Omega_2 \times \mathbb{R}_+$ and any bounded, measurable mapping $\epsilon : I \rightarrow H_{\beta}$ we have
\begin{align}\label{est-pre-loc-Lipschitz}
\| \alpha_{(\omega_2,t)}^{\epsilon}(h) \|_{\beta} \leq N \| \epsilon \|_{\infty} \| h \|_{\beta}, \quad h \in H_{\beta}.
\end{align}
\end{enumerate}
\end{lem}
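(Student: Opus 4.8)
The plan is to compute the four contributions to $\| g \|_{\beta}^2$ for $g := \alpha_{(\omega_2,t)}^{\epsilon}(h)$ directly from the norm \eqref{def-norm-beta}, exploiting that the outer variable of integration is the \emph{quality} $\eta$ rather than the maturity $\xi$. Write $f := f_{(\omega_2,t)}$, $c := \omega_2(t)$ and $\phi(x) := \partial_x h(0,c+x) = \partial_{\eta} h(0,c+x)$; since $f(\eta) \in [0,1-c]$ we have $c+x \in [c,1]$ on the range of integration, and the substitution $y=c+x$ gives $\int_0^{1-c} |\phi(x)|^2 dx = \int_c^1 |\partial_{\eta} h(0,y)|^2 dy \le \| h \|_{\beta}^2$. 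The crucial simplification is that $f(\eta)=0$ for $\eta \le \omega_2(t-)$, so in particular $g(\xi,0)=0$ for every $\xi$; hence the first two terms of $\| g \|_{\beta}^2$ (the point value at the origin and the $\partial_{\xi} g(\cdot,0)$ integral) vanish, and only the $\partial_{\eta} g(0,\cdot)$ and $\partial_{\xi\eta} g$ terms remain to be estimated.

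Next I would differentiate under the integral sign. By Leibniz's rule each of $\partial_{\eta} g$ and $\partial_{\xi\eta} g$ splits as a \emph{boundary term}, carrying the factor $f'(\eta)\in\{0,1\}$ and a pointwise evaluation $\phi(f(\eta))$, plus an \emph{interior integral} over $[0,f(\eta)]$ of the matching derivative of $\epsilon(x)(\cdot,\cdot)$ against $\phi(x)$. The interior terms are the easy ones: applying Cauchy--Schwarz in $x$ over $[0,1-c]$, then Fubini and the bound $\int_0^{1-c}|\phi|^2 \le \| h \|_{\beta}^2$, reduces everything to $\int_0^{1-c} \big( \int_0^1 |\partial_{\eta}\epsilon(x)(0,\eta)|^2 d\eta \big) dx$ and $\int_0^{1-c}\big( \int_0^{\infty}\int_0^1 |\partial_{\xi\eta}\epsilon(x)(\xi,\eta)|^2 d\eta\, e^{\beta\xi} d\xi\big) dx$, each at most $(1-c)\,\| \epsilon \|_{\infty}^2 \le \| \epsilon \|_{\infty}^2$, since the inner integrals are dominated by $\| \epsilon(x) \|_{\beta}^2 \le \| \epsilon \|_{\infty}^2$ through \eqref{def-norm-beta}.

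The boundary terms are where the real work lies, and they are the main obstacle: $\phi(f(\eta))$ is a \emph{pointwise} value of $\partial_{\eta} h(0,\cdot)$, which the norm controls only in $L^2$. I would resolve this in two moves. First, the factor $\epsilon(f(\eta))(\cdot,\cdot)$ is pulled out in supremum norm via $\| \epsilon(x) \|_{\infty} \le C_1 \| \epsilon(x) \|_{\beta} \le C_1 \| \epsilon \|_{\infty}$ from \eqref{est-C4}; for the mixed term one first needs the uniform-in-$\eta$ estimate $\int_0^{\infty} |\partial_{\xi}\epsilon(x)(\xi,\eta)|^2 e^{\beta\xi} d\xi \le 4 \| \epsilon \|_{\infty}^2$, obtained by writing $\partial_{\xi}\epsilon(x)(\xi,\eta) = \partial_{\xi}\epsilon(x)(\xi,0) + \int_0^{\eta}\partial_{\xi\eta}\epsilon(x)(\xi,\zeta)\,d\zeta$ and bounding both pieces by \eqref{def-norm-beta}. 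Second---and this is what removes the pointwise value---on the set $\{f'=1\}$ one has $f(\eta)=\eta-\omega_2(t-)$, so the change of variables $x=f(\eta)$ turns the boundary contribution $\int_{\{f'=1\}}|\phi(f(\eta))|^2 d\eta$ back into the honest $L^2$ integral $\int_0^{1-c}|\phi(x)|^2 dx \le \| h \|_{\beta}^2$. Collecting the interior and boundary estimates yields $\| g \|_{\beta}^2 \le N^2 \| \epsilon \|_{\infty}^2 \| h \|_{\beta}^2$ with a constant $N$ depending only on $\beta$ (through $C_1$) and on absolute factors, hence uniform in $(\omega_2,t)$ and in $\epsilon$, which is precisely \eqref{est-pre-loc-Lipschitz}.

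Finally, part (1) comes out of the same computation: finiteness of $\| g \|_{\beta}$ supplies the integrability required by \eqref{def-norm-beta}, while the absolute continuity of $g(\xi,\cdot)$, $\partial_{\xi}g(\xi,\cdot)$, $g(\cdot,\eta)$, $\partial_{\eta}g(\cdot,\eta)$ together with the identity $\partial_{\xi\eta}g=\partial_{\eta\xi}g$ are inherited from the corresponding properties of $\epsilon(x),h\in H_{\beta}$ and the Lipschitz continuity of the piecewise-linear map $f$. These are exactly the ingredients that legitimise the applications of Leibniz's rule, differentiation under the integral, and Fubini's theorem invoked above, so I would verify them first and then read off $\alpha_{(\omega_2,t)}^{\epsilon}(h)\in H_{\beta}$.
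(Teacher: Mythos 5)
Your argument is correct and follows essentially the same route as the paper's proof: Leibniz differentiation under the integral, the observation that the contributions at $\eta=0$ vanish, Cauchy--Schwarz and Fubini for the interior integrals, and the sup-norm estimate \eqref{est-C4} together with the uniform bound $\int_0^{\infty}|\partial_{\xi}\epsilon(x)(\xi,\eta)|^2e^{\beta\xi}d\xi\le 4\|\epsilon\|_{\infty}^2$ for the boundary terms. If anything, your explicit retention of the factor $f'_{(\omega_2,t)}(\eta)\in\{0,1\}$ and the change of variables on $\{f'_{(\omega_2,t)}=1\}$ is a slightly more careful justification of the step where the paper simply asserts $\int_0^1|\partial_\eta h(0,\omega_2(t)+f_{(\omega_2,t)}(\eta))|^2d\eta\le\int_0^1|\partial_\eta h(0,\eta)|^2d\eta$.
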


\begin{proof}
We fix $(\omega_2,t) \in \Omega_2 \times \mathbb{R}_+$ and a bounded, measurable mapping $\epsilon : I \rightarrow H_{\beta}$. Furthermore, let $h \in H_{\beta}$ be arbitrary. Let us determine the partial derivatives of $\alpha_{(\omega_2,t)}^{\epsilon}(h)$. The partial derivative $\partial_{\xi}$ is given by
\begin{align*}
\partial_{\xi} \alpha_{(\omega_2,t)}^{\epsilon}(h)(\xi,\eta) &= \int_0^{f_{(\omega_2,t)}(\eta)} \partial_{\xi} \epsilon(x)(\xi,\eta) \partial_x h(0,\omega_2(t) + x) dx,
\end{align*}
the partial derivative $\partial_{\eta}$ is given by
\begin{align*}
\partial_\eta \alpha_{(\omega_2,t)}^{\epsilon}(h)(\xi,\eta) &= \epsilon(f_{(\omega_2,t)}(\eta))(\xi,\eta) \partial_x h(0,\omega_2(t) + f_{(\omega_2,t)}(\eta))
\\ &\quad + \int_0^{f_{(\omega_2,t)}(\eta)} \partial_{\eta} \epsilon(x)(\xi,\eta) \partial_x h(0,\omega_2(t) + x) dx,
\end{align*}
and the second order derivative $\partial_{\xi \eta}$ is given by
\begin{align*}
\partial_{\xi \eta} \alpha_{(\omega_2,t)}^{\epsilon}(h)(\xi,\eta) &= \partial_{\xi} \epsilon(f_{(\omega_2,t)}(\eta))(\xi,\eta) \partial_x h(0,\omega_2(t) + f_{(\omega_2,t)}(\eta))
\\ &\quad +\int_0^{f_{(\omega_2,t)}(\eta)} \partial_{\xi \eta} \epsilon(x)(\xi,\eta) \partial_x h(0,\omega_2(t) + x) dx.
\end{align*}
In particular, we have
\begin{align*}
\partial_\xi \alpha_{(\omega_2,t)}^{\epsilon}(h)(\xi,0) = 0.
\end{align*}
By estimate (\ref{est-C4}) of Theorem~\ref{thm-group} we obtain
\begin{align*}
&\int_0^1 | \epsilon(f_{(\omega_2,t)}(\eta))(0,\eta) \partial_x h(0,\omega_2(t) + f_{(\omega_2,t)}(\eta)) |^2 d\eta 
\\ &\leq \sup_{x \in I} \| \epsilon(x)(0,\cdot) \|_{L^{\infty}([0,1])}^2  \int_0^1 | \partial_\eta h(0,\omega_2(t) + f_{(\omega_2,t)}(\eta)) |^2 d\eta
\\ &\leq C_1^2 \sup_{x \in I} \| \epsilon(x) \|_{\beta}^2 \int_0^1 | \partial_{\eta} h(0,\eta) |^2 d\eta \leq C_1^2 \| \epsilon \|_{\infty}^2 \| h \|_{\beta}^2.
\end{align*}
Moreover, we get
\begin{align*}
&\int_0^1 \bigg| \int_0^{f_{(\omega_2,t)}(\eta)} \partial_{\eta} \epsilon(x)(0,\eta) \partial_x h(0,\omega_2(t)+x) dx \bigg|^2 d\eta
\\ &\leq \int_0^1 \int_0^{1-\omega_2(t)} | \partial_{\eta} \epsilon(x)(0,\eta) \partial_x h(0,\omega_2(t)+x) |^2 dx d\eta
\\ &= \int_0^{1-\omega_2(t)} | \partial_x h(0,\omega_2(t) + x) |^2 \int_0^1 | \partial_{\eta} \epsilon(x)(0,\eta) |^2 d\eta dx 
\\ &\leq \int_0^{1-\omega_2(t)} | \partial_x h(0,\omega_2(t) + x) |^2 \| \epsilon(x) \|_{\beta}^2 dx
\leq \| \epsilon \|_{\infty}^2 \int_0^1 | \partial_{\eta} h(0,\eta) |^2 d\eta \leq \| \epsilon \|_{\infty}^2 \| h \|_{\beta}^2.
\end{align*}
For every fixed $\eta \in [0,1]$ we have
\begin{align*}
\partial_{\xi} \epsilon(f_{(\omega_2,t)}(\eta))(\xi,\eta) = \partial_{\xi} \epsilon(f_{(\omega_2,t)}(\eta))(\xi,0) + \int_0^{\eta} \partial_{\xi \eta} \epsilon(f_{(\omega_2,t)}(\eta))(\xi,\bar{\eta}) d \bar{\eta},
\end{align*}
which implies
\begin{align*}
&\int_0^{\infty} | \partial_{\xi} \epsilon(f_{(\omega_2,t)}(\eta))(\xi,\eta) |^2 e^{\beta \xi} d\xi 
\\ &\leq 2 \int_0^{\infty} \bigg| \partial_{\xi} \epsilon(f_{(\omega_2,t)}(\eta))(\xi,0) \bigg|^2 e^{\beta \xi} d\xi + 2 \int_0^{\infty} \bigg| \int_0^{\eta} \partial_{\xi \eta} \epsilon(f_{(\omega_2,t)}(\eta))(\xi,\bar{\eta}) d \bar{\eta} \bigg|^2 e^{\beta \xi} d\xi
\\ &\leq 2 \| \epsilon(f_{(\omega_2,t)}(\eta)) \|_{\beta}^2 + 2 \int_0^{\infty} \int_0^{1} | \partial_{\xi \eta} \epsilon(f_{(\omega_2,t)}(\eta))(\xi,\bar{\eta}) |^2 d \bar{\eta} e^{\beta \xi} d\xi
\\ &\leq 4 \| \epsilon(f_{(\omega_2,t)}(\eta)) \|_{\beta}^2 \leq 4 \| \epsilon \|_{\infty}^2.
\end{align*}
Therefore, we obtain
\begin{align*}
&\int_0^{\infty} \int_0^1 | \partial_{\xi} \epsilon(f_{(\omega_2,t)}(\eta))(\xi,\eta) \partial_x h(0,\omega_2(t) + f_{(\omega_2,t)}(\eta)) |^2 d\eta e^{\beta \xi} d\xi
\\ &\leq \int_0^1 | \partial_x h(0,\omega_2(t) + f_{(\omega_2,t)}(\eta)) |^2 \int_0^{\infty} | \partial_{\xi} \epsilon(f_{(\omega_2,t)}(\eta))(\xi,\eta) |^2 e^{\beta \xi} d\xi d\eta
\\ &\leq 4 \| \epsilon \|_{\infty}^2 \int_0^1 | \partial_{\eta} h(0,\eta) |^2 d\eta \leq 4 \| \epsilon \|_{\infty}^2 \| h \|_{\beta}^2.
\end{align*}
Moreover, we have
\begin{align*}
&\int_0^{\infty} \int_0^1 \bigg| \int_0^{f_{(\omega_2,t)}(\eta)} \partial_{\xi \eta} \epsilon(x)(\xi,\eta) \partial_x h(0,\omega_2(t) + x) dx \bigg|^2 d\eta e^{\beta \xi} d\xi
\\ &\leq \int_0^{\infty} \int_0^1 \int_0^{1 - \omega_2(t)} | \partial_{\xi \eta} \epsilon(x)(\xi,\eta) \partial_x h(0,\omega_2(t) + x) |^2 dx d\eta e^{\beta \xi} d\xi
\\ &\leq \int_0^{1-\omega_2(t)} | \partial_x h(0,\omega_2(t)+x) |^2 \int_0^{\infty} \int_0^1 | \partial_{\xi \eta} \epsilon(x)(\xi,\eta) |^2 d\eta e^{\beta \xi} d\xi dx
\\ &\leq \int_0^{1-\omega_2(t)} | \partial_x h(0,\omega_2(t)+x) |^2 \| \epsilon(x) \|_{\beta}^2 dx
\leq \| \epsilon \|_{\infty}^2 \int_0^1 | \partial_{\eta} h(0,\eta) |^2 d\eta \leq \| \epsilon \|_{\infty}^2 \| h \|_{\beta}^2.
\end{align*}
Taking into account the definition (\ref{def-norm-beta}) of the norm $\| \cdot \|_{\beta}$, this concludes the proof.
\end{proof}

Now, we shall prove that $\alpha$ is locally Lipschitz and satisfies the linear growth condition.

\begin{prop}\label{prop-alpha-3}
The following statements are true:
\begin{enumerate}
\item We have $\alpha(\Omega_2 \times \mathbb{R}_+ \times H_{\beta}) \subset H_{\beta}$.

\item For each $n \in \mathbb{N}$ there exists a constant $L_n \geq 0$ such that
\begin{align*}
\| \alpha(\omega_2,t,h_1) - \alpha(\omega_2,t,h_2) \|_{\beta} \leq L_n \| h_1 - h_2 \|
\end{align*}
for all $(\omega_2,t) \in \Omega_2 \times \mathbb{R}_+$ and all $h_1,h_2 \in H_{\beta}$ with $\| h_1 \|_{\beta}, \| h_2 \|_{\beta} \leq n$.

\item There exists a constant $K \geq 0$ such that
\begin{align*}
\| \alpha(\omega_2,t,h) \|_{\beta} \leq K \| h \|_{\beta} \quad \text{for all $(\omega_2,t,h) \in \Omega_2 \times \mathbb{R}_+ \times H_{\beta}$.}
\end{align*}
\end{enumerate}
\end{prop}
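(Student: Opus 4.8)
The plan is to handle the three summands $\alpha^1$, $\alpha^2$ and $\alpha_{(\omega_2,t)}^3$ of the decomposition $\alpha(\omega_2,t,h) = \alpha^1(h) + \alpha^2(h) + \alpha_{(\omega_2,t)}^3(h)$ separately. The first two have effectively been dealt with in Lemmas~\ref{lemma-alpha-1} and \ref{lemma-alpha-2}: both map $H_{\beta}$ into itself, both are globally Lipschitz, and — as is visible from the first display in each of those proofs — both are bounded by a constant independent of $h$. All remaining work therefore concentrates on $\alpha_{(\omega_2,t)}^3$, and the key observation I would exploit is that it fits the template of Lemma~\ref{lemma-local-essential}: setting $\epsilon_h(x) := \delta(h,x) \exp(-\mathcal{I} \delta(h,x))$, one has $\alpha_{(\omega_2,t)}^3(h) = -\alpha_{(\omega_2,t)}^{\epsilon_h}(h)$, where $\alpha_{(\omega_2,t)}^{\epsilon}$ is the auxiliary operator of that lemma. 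The crucial structural fact is that $\alpha_{(\omega_2,t)}^{\epsilon}(h)$ is \emph{linear} both in $h$ and in $\epsilon$.

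For claim (1) and the linear growth part (3) I would first verify that $\epsilon_h : I \to H_{\beta}$ is bounded and measurable, with a bound independent of $h$. Boundedness follows exactly as in the first display of the proof of Lemma~\ref{lemma-alpha-2}: the multiplication estimate of Theorem~\ref{thm-group}, the growth bound (\ref{exp-lin-growth}), continuity of $\mathcal{I}$ and (\ref{delta-cond-1}) yield $\sup_{x \in I} \| \epsilon_h(x) \|_{\beta} \le \tilde M$ for a constant $\tilde M$ depending only on $M$, $C_2$, $C_3$, $\| m \|$ and $\| \mathcal{I} \|$. Lemma~\ref{lemma-local-essential}(1) then gives $\alpha_{(\omega_2,t)}^3(h) \in H_{\beta}$, which together with Lemmas~\ref{lemma-alpha-1} and \ref{lemma-alpha-2} proves (1). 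For (3), Lemma~\ref{lemma-local-essential}(2) gives $\| \alpha_{(\omega_2,t)}^3(h) \|_{\beta} \le N \| \epsilon_h \|_{\infty} \| h \|_{\beta} \le N \tilde M \| h \|_{\beta}$, uniformly in $(\omega_2,t)$; combining this homogeneous bound with the uniform constant bounds on $\alpha^1$ and $\alpha^2$ yields the required linear growth estimate via the triangle inequality.

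The heart of the argument, and the step I expect to be the main obstacle, is the local Lipschitz estimate (2) for $\alpha^3$. Using the bilinearity in $(\epsilon,h)$, I would split
\begin{align*}
\alpha_{(\omega_2,t)}^3(h_1) - \alpha_{(\omega_2,t)}^3(h_2)
&= -\alpha_{(\omega_2,t)}^{\epsilon_{h_1}}(h_1 - h_2)
- \alpha_{(\omega_2,t)}^{\,\epsilon_{h_1} - \epsilon_{h_2}}(h_2).
\end{align*}
The first term is bounded by $N \tilde M \| h_1 - h_2 \|_{\beta}$ through Lemma~\ref{lemma-local-essential}(2) and the uniform bound on $\epsilon_{h_1}$. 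For the second term, Lemma~\ref{lemma-local-essential}(2) gives $N \| h_2 \|_{\beta} \, \| \epsilon_{h_1} - \epsilon_{h_2} \|_{\infty} \le N n \, \| \epsilon_{h_1} - \epsilon_{h_2} \|_{\infty}$ whenever $\| h_2 \|_{\beta} \le n$; this is precisely where the \emph{local} nature enters, since the factor $\| h_2 \|_{\beta}$ cannot be removed. It then remains to show that $h \mapsto \epsilon_h$ is Lipschitz uniformly in $x$, i.e.\ $\| \epsilon_{h_1} - \epsilon_{h_2} \|_{\infty} \le L \| h_1 - h_2 \|_{\beta}$; this is the same telescoping computation as the Lipschitz estimate in Lemma~\ref{lemma-alpha-2}, decomposing $\delta(h_1,x) \exp(-\mathcal{I} \delta(h_1,x)) - \delta(h_2,x) \exp(-\mathcal{I} \delta(h_2,x))$ and bounding it with the multiplication operator, the local Lipschitz property of $\exp$ on the bounded set $-\mathcal{I} \delta(H_{\beta} \times I)$, continuity of $\mathcal{I}$, and (\ref{delta-cond-1}), (\ref{delta-cond-2}), all uniformly in $x$. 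Feeding this back, $\alpha^3$ is $(N \tilde M + N n L)$-Lipschitz on the ball of radius $n$, and adding the global Lipschitz constants of $\alpha^1$ and $\alpha^2$ yields (2) with $L_n := \mathrm{(const)} + N n L$.
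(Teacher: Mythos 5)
Your proposal is correct and follows essentially the same route as the paper's proof: the same reduction of $\alpha^3_{(\omega_2,t)}$ to the auxiliary operator of Lemma~\ref{lemma-local-essential} via $\epsilon_h(x)=\delta(h,x)\exp(-\mathcal{I}\delta(h,x))$, the same verification (as in Lemma~\ref{lemma-alpha-2}) that $h\mapsto\epsilon_h$ is uniformly bounded and Lipschitz, and the same bilinear telescoping for the local Lipschitz estimate, differing only in which argument the difference is placed (the paper writes $\alpha^{\epsilon_{h_1}-\epsilon_{h_2}}(h_1)+\alpha^{\epsilon_{h_2}}(h_1-h_2)$). The only cosmetic remark is that your final bound for part (3) is of the affine form $C+K\|h\|_{\beta}$ because of the constant bounds on $\alpha^1,\alpha^2$; this matches what the paper's argument actually delivers as well.
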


\begin{proof}
We define the mapping $\epsilon$ as
\begin{align*}
\epsilon(h,x) := \delta(h,x) \exp(-\mathcal{I} \delta(h,x)), \quad (h,x) \in H_{\beta} \times I.
\end{align*}
As in the proof of Lemma~\ref{lemma-alpha-2}, we show that $\epsilon(H_{\beta} \times I) \subset H_{\beta}$, and that there is a constant $M_{\epsilon} > 0$ such that for all $x \in I$ we have
\begin{align*}
\| \epsilon(h,x) \|_{\beta} &\leq M_{\epsilon}, \quad h \in H_{\beta}
\\ \| \epsilon(h_1,x) - \epsilon(h_2,x) \|_{\beta} &\leq M_{\epsilon} \| h_1 - h_2 \|_{\beta}, \quad h_1,h_2 \in H_{\beta}.
\end{align*}
In particular, for each $h \in H_{\beta}$ the mapping $\epsilon(h,\cdot) : I \rightarrow H_{\beta}$ is bounded and measurable. Moreover, we have
\begin{align*}
\alpha_{(\omega_2,t)}^3(h) = \alpha_{(\omega_2,t)}^{\epsilon(h,\cdot)} \quad \text{for all $(\omega_2,t,h) \in \Omega_2 \times \mathbb{R}_+ \times H_{\beta}$.}
\end{align*}
Let $(\omega_2,t) \in \Omega_2 \times \mathbb{R}_+$ be arbitrary. By Lemma~\ref{lemma-local-essential} we have $\alpha_{(\omega_2,t)}^3(H_{\beta}) \subset H_{\beta}$, for each $h \in H_{\beta}$  we have
\begin{align*}
\| \alpha_{(\omega_2,t)}^3(h) \|_{\beta} = \| \alpha_{(\omega_2,t)}^{\epsilon(h,\cdot)}(h) \|_{\beta} \leq N \| \epsilon(h,\cdot) \|_{\infty} \| h \|{\beta} \leq N M_{\epsilon} \| h \|_{\beta},
\end{align*}
and for all $h_1,h_2 \in H_{\beta}$ we obtain
\begin{align*}
&\| \alpha_{(\omega_2,t)}^3(h_1) - \alpha_{(\omega_2,t)}^3(h_2) \|_{\beta} = \| \alpha_{(\omega_2,t)}^{\epsilon(h_1,\cdot)}(h_1) - \alpha_{(\omega_2,t)}^{\epsilon(h_2,\cdot)}(h_2) \|_{\beta}
\\ &\leq \| \alpha_{(\omega_2,t)}^{\epsilon(h_1,\cdot) - \epsilon(h_2,\cdot)}(h_1) \|_{\beta} + \| \alpha_{(\omega_2,t)}^{\epsilon(h_2,\cdot)}(h_1-h_2) \|_{\beta}
\\ &\leq N \| \epsilon(h_1,\cdot) - \epsilon(h_2,\cdot) \|_{\infty} \| h_1 \|_{\beta} + N \| \epsilon(h_2,\cdot) \|_{\infty} \| h_1 - h_2 \|_{\beta}
\\ &\leq N M_{\epsilon} \| h_1 - h_2 \|_{\beta} \| h_1 \|_{\beta} + N M_{\epsilon} \| h_1 - h_2 \|_{\beta}.
\end{align*}
Together with Lemmas~\ref{lemma-alpha-1} and \ref{lemma-alpha-2}, this concludes the proof.
\end{proof}

Now, we are ready to provide the proof of Theorem~\ref{4.3}.

\begin{proof}[Proof of Theorem~\ref{4.3}]
According to Theorem~\ref{thm-group} there exist another separable Hilbert space $\mathcal{H}_{\beta}$, a $C_0$-group $(U_t)_{t \in \mathbb{R}}$ on $\mathcal{H}_{\beta}$ and continuous linear operators $\ell \in L(H_{\beta},\mathcal{H}_{\beta})$, $\pi \in L(\mathcal{H}_{\beta},H_{\beta})$ such that $\pi U_t \ell = S_t$ for all $t \in \mathbb{R}_+$. Therefore, by virtue of condition (\ref{sigma-cond-2}) and Proposition~\ref{prop-alpha-3}, existence and uniqueness of mild solutions for the SPDE (\ref{SPDE-omega-2}) follows from \citeN[Theorem~4.5]{Tappe2012a}.
\end{proof}

\section{Positivity and monotonicity}\label{sec-pos-mon}

Monotonicity of the bond prices $P(t,T,\eta)$ with respect to the quality $\eta$ is a desirable modelling feature. As we shall see, it is implied by the positivity of the forward rates.

We continue to work under Assumption (A1') such that the loss process $L$ is given by \eqref{def:L2}. 
Moreover, we study the forward rates given by the SPDE in \eqref{Musiela-CDO} and assume that the drift condition is satisfied, i.e.~\eqref{SPDE:dc} holds.

\begin{defin}
The term structure model (\ref{eq:TxbondsViaForwardRates}) is called \emph{monotone}, if for all $0 \leq t \leq T$ and $ 0 \le \eta_1 \leq \eta_2 \le 1$ we have
$$ \mathbb{Q}\big(  P(t,T, \eta_1) \le P(t,T,\eta_2) \big) =1. $$
\end{defin}

Since we study the forward rate dynamics under a martingale measure $\Q$, the discounted bond prices are local martingales. If they are even true martingales, then monotonicity follows directly as a consequence of our upcoming result.

\begin{prop}\label{prop-MT-monoton}
Consider  $ 0 \le \eta_1 \leq \eta_2 \le 1$ and $T \geq 0$, and assume that 
$ (D_{t} P(t,T,\eta_i))_{0 \le t \le T}$ are martingales for $i=1,2$. Then for all $t \in [0,T]$ we have
$$ \mathbb{Q}\big(  P(t,T, \eta_1) \le P(t,T,\eta_2) \big) =1. $$
\end{prop}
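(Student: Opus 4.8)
The plan is to exploit the martingale representation of the discounted bond prices together with the pathwise monotonicity of the terminal payoff in the quality parameter. First I would evaluate the bonds at maturity: from the representation \eqref{repr-bond} the exponential factor equals one when $t = T$, so that $P(T,T,\eta) = \ind{L_T \le \eta}$ for every $\eta$. Since the loss process $L$ is (by Assumption (A1')) $\cI$-valued and non-decreasing, and $\eta_1 \le \eta_2$, we have the inclusion $\{L_T \le \eta_1\} \subseteq \{L_T \le \eta_2\}$, whence $\ind{L_T \le \eta_1} \le \ind{L_T \le \eta_2}$ pointwise. Multiplying by the non-negative $\cF_T$-measurable factor $D_T$ gives the pathwise ordering
$$ D_T P(T,T,\eta_1) \le D_T P(T,T,\eta_2) \qquad \Q\text{-a.s.} $$

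Next, because each process $(D_t P(t,T,\eta_i))_{0 \le t \le T}$ is assumed to be a genuine $\Q$-martingale, I would represent the current discounted prices as conditional expectations of the terminal payoffs: for every $t \in [0,T]$,
$$ D_t P(t,T,\eta_i) = \E\big[ D_T P(T,T,\eta_i) \,\big|\, \cF_t \big], \qquad i = 1,2. $$
Applying the monotonicity of conditional expectation to the pathwise inequality from the first step then yields
$$ D_t P(t,T,\eta_1) \le D_t P(t,T,\eta_2) \qquad \Q\text{-a.s.} $$

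Finally, I would divide by the discount factor. Since $D_t = \exp(-\int_0^t r_s\,ds)$ and the short rate satisfies $\int_0^T |r_s|\,ds < \infty$ $\Q$-a.s.\ under (A2)--(A5), the random variable $D_t$ is strictly positive almost surely; it is moreover $\cF_t$-measurable, which is precisely what licenses its treatment as a known quantity under the conditioning above. Dividing the last inequality by $D_t > 0$ produces $P(t,T,\eta_1) \le P(t,T,\eta_2)$ $\Q$-a.s., which is the assertion.

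There is essentially no analytic obstacle here, as the whole argument rests on the monotonicity of the expectation operator; the only points deserving a word of care are the strict positivity of $D_t$ (needed in order to divide) and the pathwise ordering of the terminal indicators (which relies on $L$ being $\cI$-valued and non-decreasing). The genuinely hard part is not this proposition but the hypothesis behind it: absence of arbitrage delivers only the \emph{local} martingale property of $(D_t P(t,T,\eta))$, not the true martingale property invoked here. Upgrading from local to true martingales—or, equivalently, establishing monotonicity directly at the level of the forward-rate dynamics—is exactly what the subsequent positivity-preserving results of this section are designed to supply.
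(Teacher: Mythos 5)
Your argument is correct and coincides with the paper's own proof: both use the martingale property to write the discounted price as the conditional expectation of the terminal payoff $D_T\ind{L_T\le\eta_i}$, invoke the pointwise ordering $\ind{L_T\le\eta_1}\le\ind{L_T\le\eta_2}$, and conclude by monotonicity of conditional expectation together with positivity of $D$. Your closing remark about the local-versus-true martingale gap also matches the paper's framing, where Proposition~\ref{prop-pos-mon} supplies the true martingale property via positivity.
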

\begin{proof}
Let $t \in [0,T]$ be arbitrary. By definition the discounting process $D$ is positive. The martingale property and representation (\ref{repr-bond}) yield that
$$ P(t,T,\eta_i) = \E^{\Q} \left[ \frac{D_T}{D_t} \ind{L_T \le \eta_i} \, \Big| \, \cF_t \right], \quad i=1,2. $$
Moreover, we have $\ind{y \le \eta_1} \le \ind{y \le \eta_2}$ for all $y\in\R$. Consequently, the monotonicity of the conditional expectation gives the result.  
\end{proof}

In order to define positivity of the forward rates, we introduce the closed, convex cone of non-negative functions of $H_{\beta}$ as
\begin{align*}
\cP = \{ h \in H_{\beta} : h(\xi,\eta) \ge 0 \text{ for all } (\xi,\eta) \in \R_+ \times [0,1] \}.
\end{align*}

\begin{defin}
The family of term structure models (\ref{Musiela-CDO}) is called \emph{positivity preserving} if for all $h_0 \in \cP$ we have
\begin{align*}
\mathbb{Q}(r_t \in \cP) = 1 \quad \text{for all $t \geq 0$,}
\end{align*}
where $(r_t)_{t \geq 0}$ denotes the mild solution for the SPDE (\ref{Musiela-CDO}) with $r_0 = h_0$.
\end{defin}

Now, we will prove that the positivity preserving property implies the monotonicity of the term structure model.

\begin{prop}\label{prop-pos-mon}
If the family of term structure models (\ref{Musiela-CDO}) is \emph{positivity preserving}, then for each $h_0 \in \cP$ the term structure model (\ref{eq:TxbondsViaForwardRates}) is monotone.
\end{prop}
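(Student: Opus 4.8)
The plan is to reduce the claim to Proposition~\ref{prop-MT-monoton} by using the positivity preserving property to upgrade the discounted bond prices from \emph{local} martingales to \emph{true} martingales. Fix $h_0 \in \cP$ and let $(r_t)_{t \ge 0}$ denote the mild solution of \eqref{Musiela-CDO} with $r_0 = h_0$; by Theorem~\ref{4.3} it exists and is unique. Since the drift condition \eqref{SPDE:dc} is assumed to hold, $\Q$ is an equivalent local martingale measure (Theorem~\ref{thm1}, Proposition~\ref{prop:2.1}), so each process $(D_t P(t,T,\eta))_{0 \le t \le T}$ is a local $\Q$-martingale, as already noted before Proposition~\ref{prop-MT-monoton}.

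First I would write out the discounted bond price explicitly. Combining the representation \eqref{repr-bond} with the Musiela parametrization $f(t,u,\eta) = r_t(u-t,\eta)$ and the substitution $u = t + \xi$ gives
\begin{align*}
D_t P(t,T,\eta) = \exp\Big( -\int_0^t r_s(0,1)\, ds \Big) \ind{L_t \le \eta} \exp\Big( -\int_0^{T-t} r_t(\xi,\eta)\, d\xi \Big).
\end{align*}
By the positivity preserving property we have $\Q(r_t \in \cP) = 1$ for all $t \ge 0$, so $r_t(\xi,\eta) \ge 0$ for all $(\xi,\eta) \in \R_+ \times [0,1]$, $\Q$-almost surely. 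In particular the short rate $r_s(0,1)$ is non-negative (for a.e.\ $s$, $\Q$-a.s., by Fubini), so $D_t \le 1$; and the inner integral $\int_0^{T-t} r_t(\xi,\eta)\, d\xi$ is non-negative and finite, since functions in $H_{\beta}$ are continuous and bounded by Theorem~\ref{thm-group}. Together with $\ind{L_t \le \eta} \in \{0,1\}$ this yields
\begin{align*}
0 \le D_t P(t,T,\eta) \le 1 \quad \text{for all } 0 \le t \le T, \ \Q\text{-a.s.}
\end{align*}
A local martingale dominated by the integrable constant $1$ is a genuine martingale: along a localizing sequence $\tau_n$ the stopped processes are martingales, and dominated convergence passes the martingale identity to the limit. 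Hence $(D_t P(t,T,\eta_i))_{0 \le t \le T}$ is a true $\Q$-martingale for $i = 1,2$, and applying Proposition~\ref{prop-MT-monoton} to $0 \le \eta_1 \le \eta_2 \le 1$ gives $\Q(P(t,T,\eta_1) \le P(t,T,\eta_2)) = 1$ for all $t \in [0,T]$, which is exactly monotonicity.

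The conceptual content lies entirely in the observation that positivity of the forward curves makes the discounted bonds uniformly bounded, which upgrades the local-martingale property delivered by the drift condition to a true-martingale property; after that the conditional-expectation monotonicity is precisely what was isolated in Proposition~\ref{prop-MT-monoton}. I expect no substantial probabilistic obstacle, and the only point needing care is the regularity bookkeeping: that $\int_0^{T-t} r_t(\xi,\eta)\, d\xi$ is finite and non-negative $\Q$-a.s., and that $\int_0^t r_s(0,1)\, ds \ge 0$ $\Q$-a.s.\ simultaneously in $t$. Both follow from the pointwise positivity encoded in the cone $\cP$ together with the continuity of the point evaluations on $H_{\beta}$ (Theorem~\ref{thm-group}), so this is a matter of invoking the right regularity rather than genuine difficulty.
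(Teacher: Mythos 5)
Your proposal is correct and follows essentially the same route as the paper: use the positivity preserving property to bound the discounted bond prices between $0$ and $1$, upgrade the local martingale delivered by the drift condition to a true martingale, and conclude via Proposition~\ref{prop-MT-monoton}. The extra regularity bookkeeping you supply is harmless but not needed beyond what the paper already records in Theorem~\ref{thm-group}.
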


\begin{proof}
Let $h_0 \in \cP$ be arbitrary and denote by $(r_t)_{t \geq 0}$ denotes the mild solution for the SPDE (\ref{Musiela-CDO}) with $r_0 = h_0$. By the positivity preserving property we have 
$$
\Q(r_t(\xi,\eta) \geq 0 \text{ for all } (\xi,\eta) \in \R_+ \times [0,1] ) = 1 \quad \text{for all $t \geq 0$.}
$$
Let $(T,\eta) \in \R_+ \times [0,1]$ be arbitrary.
We will show that the discounted $(T,\eta)$-bond price process is a true martingale: First, \eqref{dc1:2} and \eqref{dc2:2} are satisfied and Proposition~\ref{prop:2.1} gives that the process
$(D_t P(t,T,\eta))_{0 \leq t \leq T}$ is a local martingale. Moreover, by the representation (\ref{repr-bond}) we have
$$
D_t P(t,T,\eta) = e^{-\int_0^t r_u(0,1) du} \ind{L_t \le \eta} e^{-\int_0^{T-t} r_t(u,\eta)du}, \quad t \in [0,T].
$$
Therefore, we obtain
$$ 0 \le D_t P(t,T,\eta) \le \ind{L_t \le \eta} \le 1, \quad t \in [0,T],$$
and hence $(D_t P(t,\xi,\eta))_{0 \le t \le T}$ is a true martingale. Applying Proposition~\ref{prop-MT-monoton} finishes the proof.
\end{proof}

Now, we shall derive conditions for the positivity preserving property in terms of the characteristic coefficients $\sigma$, $\gamma$ and $\delta$ of the SPDE (\ref{Musiela-CDO}).

\begin{thm}\label{thm:positivity}
Suppose $\sigma \in C^2(H_{\beta};L_2^0(H_{\beta}))$ and that the mapping $$h \mapsto \sum_{j \in \mathbb{N}} D \sigma^j(h) \sigma^j(h)$$ 
is Lipschitz continuous on $H_{\beta}$. Furthermore, we assume that for all $j \in \mathbb{N}$ we have
\begin{align}\label{cond-sigma}
\sigma^j(h)(\xi,\eta) = 0, \quad \text{for all $h \in \cP$ and $(\xi,\eta) \in \mathbb{R}_+ \times [0,1]$ with $h(\xi,\eta) = 0$}
\end{align}
and for $\tilde F$-almost all $x \in G$ and all $y \in I$ we have
\begin{align}\label{inv-2}
&h + \gamma(h,x) + \delta(h,y) \in \cP, \quad \text{for all $h \in \cP$,}
\\ \label{inv-3} &\gamma(h,x)(\xi,\eta) = 0, \quad \text{for all $h \in \cP$ and $(\xi,\eta) \in \mathbb{R}_+ \times [0,1]$ with $h(\xi,\eta) = 0$,}
\\ \label{inv-4} &\delta(h,y)(\xi,\eta) = 0, \quad \text{for all $h \in \cP$ and $(\xi,\eta) \in \mathbb{R}_+ \times [0,1]$ with $h(\xi,\eta) = 0$.}
\end{align}
Then, the family of term structure models (\ref{Musiela-CDO}) is positivity preserving.
\end{thm}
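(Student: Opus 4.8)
The plan is to read the statement as a stochastic invariance problem for the closed convex cone $\cP$ and to reduce it to boundary conditions on the characteristic coefficients. By Theorem~\ref{thm-group}(6) every point evaluation $\mathrm{ev}_{(\xi,\eta)} : h \mapsto h(\xi,\eta)$ is a continuous linear functional, and $\cP = \{ h \in H_{\beta} : \mathrm{ev}_{(\xi,\eta)}(h) \ge 0 \text{ for all } (\xi,\eta) \in \mathbb{R}_+ \times [0,1] \}$; these functionals are exactly the supporting functionals of the cone, so invariance of $\cP$ is governed by the behaviour of the solution on the faces $\{ h \in \cP : h(\xi,\eta) = 0 \}$. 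Since the generator of the SPDE (\ref{Musiela-CDO}) is the shift $d/d\xi$, whose semigroup $(S_t)$ satisfies $S_t \cP \subset \cP$ because shifting a non-negative surface preserves non-negativity, the semigroup part causes no loss of positivity, and it remains to control the drift, the diffusion and the two jump terms on these faces.

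The decisive observation is that the drift vanishes on the boundary. Fix $h \in \cP$ and $(\xi,\eta)$ with $h(\xi,\eta) = 0$. Each of the three terms in the drift \eqref{SPDE:dc} carries a factor evaluated at $(\xi,\eta)$, namely $\sigma^j(h)(\xi,\eta)$, $\gamma(h,x)(\xi,\eta)$ and $\delta(h,x)(\xi,\eta)$, and these vanish by \eqref{cond-sigma}, \eqref{inv-3} and \eqref{inv-4}, respectively. Hence $\alpha(\omega_2,t,h)(\xi,\eta) = 0$ uniformly in $(\omega_2,t) \in \Omega_2 \times \mathbb{R}_+$, so the It\^o drift is (trivially) inward-pointing at every boundary point. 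Condition \eqref{cond-sigma} simultaneously delivers the tangency of the diffusion, $\mathrm{ev}_{(\xi,\eta)}(\sigma^j(h)) = 0$ on the corresponding face, while \eqref{inv-2} guarantees that the jump terms keep the solution inside $\cP$: whenever $r_{t-} = h \in \cP$, the post-jump value again lies in the cone, since $h + \gamma(h,x) + \delta(h,y) \in \cP$ for all admissible marks.

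With these boundary conditions in hand I would invoke a general stochastic invariance (positivity) result for mild solutions of Hilbert-space SPDEs driven by a Wiener process and Poisson random measures, along the lines of the techniques used in \citeN{Tappe2012b}; the unbounded generator $d/d\xi$ is handled by the group lifting $\pi U_t \ell = S_t$ of Theorem~\ref{thm-group}(3), which transports the problem to the space $\mathcal{H}_{\beta}$ where the driving group $(U_t)$ is invertible and It\^o's formula can be applied to the supporting functionals. The main obstacle is the second-order It\^o term: positivity of the real semimartingale $\mathrm{ev}_{(\xi,\eta)}(r_t)$ upon hitting $0$ is governed not by the It\^o drift alone but by the Stratonovich drift $\alpha - \tfrac{1}{2} \sum_{j} D\sigma^j(\cdot)\sigma^j(\cdot)$, which must point inward on the face. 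This is precisely why the hypotheses $\sigma \in C^2(H_{\beta};L_2^0(H_{\beta}))$ and the Lipschitz continuity of $h \mapsto \sum_{j} D\sigma^j(h)\sigma^j(h)$ are imposed: the $C^2$ regularity lets one differentiate the tangency relation \eqref{cond-sigma} along admissible directions of the face to show that the correction $\sum_{j} \mathrm{ev}_{(\xi,\eta)}(D\sigma^j(h)\sigma^j(h))$ is compatible with the inward condition, and the Lipschitz bound supplies the coefficient regularity needed to pass to the Stratonovich formulation and run the comparison argument underlying the general theorem. Combining inward drift, diffusion tangency and jump invariance then yields $\mathbb{Q}(r_t \in \cP) = 1$ for all $t \ge 0$, i.e.\ the positivity preserving property.
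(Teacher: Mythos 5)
Your proposal is correct and follows essentially the same route as the paper: the key step in both is to observe that each term of the drift \eqref{SPDE:dc} carries a factor $\sigma^j(h)(\xi,\eta)$, $\gamma(h,x)(\xi,\eta)$ or $\delta(h,x)(\xi,\eta)$, so that by \eqref{cond-sigma}, \eqref{inv-3} and \eqref{inv-4} the drift vanishes on the faces $\{h \in \cP : h(\xi,\eta)=0\}$, after which the general positivity-preservation result of Filipovi\'c--Tappe--Teichmann (2010, Section 4) applies. Your additional remarks on diffusion tangency, jump invariance via \eqref{inv-2}, the group lifting, and the role of the $C^2$ and Lipschitz hypotheses are simply an unpacking of what that cited machinery requires, and are consistent with the paper's (much terser) argument.
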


\begin{proof}
Since the drift term $\alpha : \Omega_2 \times \R_+ \times H_{\beta} \rightarrow H_{\beta}$ of the SPDE (\ref{Musiela-CDO}) is given by (\ref{SPDE:dc}), conditions (\ref{cond-sigma}), (\ref{inv-3}) and (\ref{inv-4}) yields that for all $(\omega_2,t) \in \Omega_2 \times \R_+$ we have
\begin{align*}
\alpha(\omega_2,t,h)(\xi,\eta) = 0, \quad \text{for all $h \in \cP$ and $(\xi,\eta) \in \mathbb{R}_+ \times [0,1]$ with $h(\xi,\eta) = 0$.}
\end{align*}
Thus, proceeding as in \citeN[Section~4]{FilipovicTappeTeichmann2010} gives the positivity preserving property of the family of term structure models (\ref{Musiela-CDO}).
\end{proof}

\section{Examples}\label{examples}
In this section we illustrate applications of our general approach. We first discuss the application to the modelling of collateralized debt obligations, thus generalizing \citeN{FilipovicSchmidtOverbeck} and \citeN{SchmidtZabczyk12}. 
Thereafter we consider less specific models for portfolio credit risk in a top-down setting and related infinite-dimensional credit risk models. Finally, we give an account of stochastic mortality modelling following \citeN{TappeWeber2013}.
\subsection{CDO term structure modelling}\label{sec:examples}
In this section the general modelling of credit risk markets by $(T,\eta)$-bonds is applied to a fixed and finite portfolio
of credit risky instruments. Typical derivatives in this markets are collateralized debt obligations (CDOs) and 
single-tranche CDOs. We show how the top-down approaches for CDO markets can be embedded in the more general setting considered here.  

Mathematically speaking, a CDO is  a derivative on a portfolio of $N$ credit risky instruments.
With each instrument there is an associated nominal and we assume that the total nominal is one.
Denote the process of accumulated losses over time by $L=\stpr{L}$.
This is a pure-jump process which jumps upward at default of instruments in the pool by the occurring loss.
As the total nominal is one,  $L_t \in [0,1]$ for all $t \ge 0$.
By $\cI:=[0,1]$ we denote the set of attainable loss fractions, the case where $\cI$ is finite may be considered analogously.

To facilitate the mathematical analysis of CDO markets, we introduce  $(T,\eta)$-bonds.
A $(T,\eta)$-bond pays $\ind{L_t \le \eta}$ at maturity $T$. Hence, in our setting $\{\tau_\eta > t\} = \{L_t \le \eta\}$.
For $\eta=1$ we obtain that $P(t,T,1)=:P(t,T)$ which equals the risk-free bond.

A securitization mechanism splits the CDO pool in several tranches which have different risk profiles and serve as efficient instrument to enable trading on the CDO pool. The single-tranche CDO (STCDO) is specified by
\begin{itemize}\setlength{\itemindent}{-5mm} 
\item a number of future trading dates $T_0 < T_1 < \cdots < T_n$,
\item lower and upper detachment points $x_1, x_2$ specifying the \emph{tranche} $(x_1,x_2]\subset \cI$,
\item a fixed swap rate $\kappa$.
\end{itemize}
Set 
$$ H(x):= (x_2-x)^+ - (x_1-x)^+ = \int_{(x_1,x_2]} \ind{x \le y} dy. $$
Then the payment scheme of the STCDO can be described as follows: an investor in this STCDO
\begin{itemize}\setlength{\itemindent}{-5mm} 
\item receives $\kappa H(L_{T_i})$ at $T_i$, $i=1,\dots,n$,
\item pays $H(L_{t-})-H(L_t)$ at any time $t \in (T_0,T_n]$ when $\Delta L_t \not =0$ (i.e.~when a default occurs).
\end{itemize} 
It has been shown in \citeN[Lemma 4.1]{FilipovicSchmidtOverbeck} that by a Fubini-type argument prices of STCDOs can be expressed directly in terms of prices of $(T,\eta)$-bonds.

Regarding absence of arbitrage, we assume that $L$ satisfies
\begin{align} \label{def:L}
 L_t = \int_0^t \int_E \ind{L_{s-}+\ell_s(x)\le 1}\ell_s(x) \mu(ds,dx),
 \end{align}
where $\ell$ is a non-negative, predictable process such that for all $t \ge 0$ it holds that $\int_0^t \ind{L_{s-}+\ell_s(x)\le 1}\ell_s(x) F_s(dx) ds < \infty$ (finite activity).

Then  $L$ is a non-decreasing, pure-jump process with values in $\cI$. Furthermore,
the indicator process $(\ind{L_t\le \eta})_{t \ge 0}$ is c\`{a}dl\`{a}g and has intensity
\begin{equation}\label{deflambda}
  \lambda(t,\eta):= F_t ( \{x \in E: L_{t-} + \ell_t(x) > \eta \} );
\end{equation}
that is,
\begin{equation} \label{M2}
    M_t:= 1_{\{ L_t\le \eta\}} +\int_0^t 1_{\{ L_{s}\le \eta\}}     \lambda(s,\eta)\,ds
\end{equation}
is a martingale. Moreover, $\lambda(t,\eta)$ is decreasing in   $\eta$ with $ \lambda(t,1)=0$.
With $\tau_\eta := \inf\{t \ge 0: L_t >\eta \}$ we obtain the final link to $(T,\eta)$-bonds as in \eqref{repr-bond}:
\begin{align*}
P(t,T,\eta) = \ind{L_t \le \eta} \exp \bigg( - \int_t^{T} f(t,u,\eta) du\bigg).
\end{align*}

As a corollary we obtain the generalization of the drift condition to the infinite-dimensional setup considered in our article.
The result directly follows from Proposition \ref{prop:2.1}.
\begin{corollary}\label{cor-drift-cond}
Assume that (A2)--(A5) and \eqref{def:L} hold. Then $\Q$ is ELMM, if and only if
\begin{align}
      \alpha(t,T,\eta)  &= \sum_{j \in \mathbb{N}} \sigma^j (t,T,\eta) \Sigma^j(t,T,\eta) \nonumber\\
&- \int _{E} \gamma(t,T,\eta,x) \Big( e^{-\Gamma(t,T,\eta,x) }\ind{L_{t-}+\ell_t(x)\le \eta}-1 \Big)  F_t(dx)
 \label{dc1a}\\
    r_t(0,\eta) &= r_t+\lambda(t,\eta) , \label{dc2a}
\end{align}
where \eqref{dc1a} and \eqref{dc2a} hold on $\{L_t \le \eta\}$, $\Q\otimes dt$-a.s.
\end{corollary}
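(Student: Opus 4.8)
The plan is to realize the present setting as a special case of Theorem~\ref{thm1} (equivalently, of Proposition~\ref{prop:2.1}) and to read off \eqref{dc1a}--\eqref{dc2a} by identifying the integrand $\beta$ from assumption (A1). The only genuine work lies in this identification; once $\beta$ is pinned down, both drift conditions follow by direct substitution, with \eqref{dc2} of Theorem~\ref{thm1} reproducing \eqref{dc2a} verbatim.

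First I would set $\tau_\eta := \inf\{t \ge 0 : L_t > \eta\}$, so that $\ind{\tau_\eta > t} = \ind{L_t \le \eta}$ and $\ind{\tau_\eta \ge s} = \ind{L_{s-} \le \eta}$, and note that \eqref{def:L} expresses $L$ through the random measure $\mu$ exactly as required in Section~\ref{sec:generalsetup}. The task is then to produce a predictable, $\Pcal \otimes \Bcal(\cI) \otimes \cE$-measurable $\beta$ satisfying the representation \eqref{eq:tau}. I would obtain it by inspecting the jumps of $Y_t := \ind{L_t \le \eta}$: at a jump time $s$ with mark $x$ the loss increment is $\Delta L_s = \ind{L_{s-}+\ell_s(x) \le 1}\,\ell_s(x)$, so on $\{L_{s-} \le \eta\}$ the process $Y$ drops from $1$ to $0$ precisely when $L$ crosses the level $\eta$. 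This yields the candidate
\begin{align*}
\beta(s,\eta,x) = -\ind{L_{s-}+\ell_s(x) > \eta}, \qquad 1+\beta(s,\eta,x) = \ind{L_{s-}+\ell_s(x) \le \eta}.
\end{align*}
I would then verify that with this choice \eqref{eq:tau} holds and that the induced intensity reproduces \eqref{deflambda}, namely $\lambda(t,\eta) = -\int_E \beta(t,\eta,x)\,F_t(dx) = F_t(\{x : L_{t-}+\ell_t(x) > \eta\})$, so that the Doob--Meyer decomposition agrees with \eqref{M2}.

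With $\beta$ so identified the remainder is substitution: in the jump integral of \eqref{dc1} the factor $e^{-\Gamma(t,T,\eta,x)}(1+\beta(t,\eta,x)) - 1$ becomes $e^{-\Gamma(t,T,\eta,x)}\ind{L_{t-}+\ell_t(x) \le \eta} - 1$, which is exactly the integrand of \eqref{dc1a}, while the localisation ``on $\{L_t \le \eta\}$, $\Q \otimes dt$-a.s.'' carries over unchanged from Theorem~\ref{thm1}.

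The subtle point, and the step I would treat most carefully, is the interplay between the cap $\ind{L_{s-}+\ell_s(x) \le 1}$ in \eqref{def:L} and the simple form of $\beta$. A mark with $L_{s-}+\ell_s(x) > 1$ produces no jump of $L$, and hence none of $Y$, yet the naive integrand $-\ind{L_{s-}+\ell_s(x) > \eta}$ would register one; the two descriptions of the jumps of $Y$ coincide only if such overshooting marks carry no $F_t$-mass. This is precisely the consistency condition already implicit in \eqref{deflambda}, and it is confirmed by the stated identity $\lambda(t,1)=0$: taking $\eta = 1$ forces $F_t(\{x : L_{t-}+\ell_t(x) > 1\}) = 0$. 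Under this standing no-overshoot feature of the model the identification of $\beta$ is exact, and I would make it explicit before invoking Theorem~\ref{thm1}, after which the corollary follows.
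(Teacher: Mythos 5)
Your proposal is correct and takes essentially the same route as the paper: the paper dispatches the corollary in one line by appeal to Proposition~\ref{prop:2.1}, whose own proof performs exactly your identification $\beta(t,\eta,x)=-\ind{L_{t-}+\ell_t(x)>\eta}$, hence $1+\beta(t,\eta,x)=\ind{L_{t-}+\ell_t(x)\le\eta}$, before substituting into the drift condition of Theorem~\ref{thm1}. Your explicit handling of the cap $\ind{L_{s-}+\ell_s(x)\le 1}$ in \eqref{def:L} and the resulting no-overshoot requirement $F_t(\{x: L_{t-}+\ell_t(x)>1\})=0$ makes precise a point the paper leaves implicit in the assertion $\lambda(t,1)=0$ following \eqref{deflambda}, and your sign for $1+\beta$ is the correct one.
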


Next, we shall discuss conditions for positivity, and hence monotonicity, of the model. If for all $j \in \mathbb{N}$ we have
\begin{align*}
\sigma^j(h)(\xi,\eta) = 0, \quad \text{for all $h \in \cP$ and $(\xi,\eta) \in \mathbb{R}_+ \times [0,1]$ with $h(\xi,\eta) = 0$}
\end{align*}
and for all $x \in E$ we have
\begin{align*}
&h + \gamma(h,x) \in \cP, \quad \text{for all $h \in \cP$,}
\\ &\gamma(h,x)(\xi,\eta) = 0, \quad \text{for all $h \in \cP$ and $(\xi,\eta) \in \mathbb{R}_+ \times [0,1]$ with $h(\xi,\eta) = 0$,}
\end{align*}
then the positivity preserving property is fulfilled, which follows from Theorem \ref{thm:positivity}.

\subsection{Top-down modelling of credit portfolios}
As next example we  specify a class of models where we consider the ordered default times. This is often called a top-down approach and simplifies the analysis of the model. 
In this regard,  consider a portfolio of $N$ credit risky instruments. We denote their default times by $\sigma_1,\dots,\sigma_N$. Define the associated counting process by
$$ L_t := \frac{1}{N}\sum_{i=1}^N \ind{\sigma_i \le t}. $$
Letting $\cI=\{0,N^{-1},\dots,1\}$ we obtain that the associated first hitting times of $L$ with barrier $\eta \in \cI$,
$$ \tau_\eta := \inf\{ t \ge 0: L_t \ge \eta \} $$
equal the ordered default times $\sigma_{(1)},\dots,\sigma_{(N)}$ (if no joint defaults happen). Hence the whole portfolio and the associated loss process can be modelled by looking at term structures of $(T,\eta)$-bonds and the ordered default times $\{ \tau_\eta: \eta \in \cI \}$.

\subsection{Infinite dimensional markets with credit risk}
\label{sec:infinitebondmarket}

In this section we introduce a general model for
a large financial market bearing credit risk. We basically construct an infinite-dimensional
intensity based model. Intensity based models have been intensively studied in the literature, see \citeN{BieleckiRutkowski2002} or \citeN{Filipovic2009} for details and references. 

Consider a market with countably many companies and set $\cI=\NN$. Each company $\eta\in \cI$ is subject to default risk and we denote its default 
time by $\tau_\eta$. Each $\tau_\eta$ is assumed to be an $(\cF_t)$-stopping time. We assume that for each $\eta$ there exists a non-negative, optional process $(\lambda(t,\eta))_{t \ge 0}$ such that
$$ 
  M^\eta(t):=  \ind{\tau_\eta >t } + \int _0^t \ind{\tau_\eta \ge s} \lambda(s,\eta) ds
$$
is a martingale. The process $(\lambda(t,\eta))$ is called default intensity of company $\eta$. 

We associate a random measure $\mu^*$ to the default times as follows: the mark space $F=\{0,1\}^\infty$ is spanned by the unit vectors $e_1,e_2,\dots$ and
$$ \mu^*(A \times B) := \sum_{\eta \in \cI: e_\eta \in B} \delta_{\tau_\eta} (A),  \qquad A \in \cB(\R_+),\ B \subset F,$$
where $\delta_t$ denotes the Dirac-measure at time $t$.  We define 
$$ F_t^*(B) :=   \sum_{\eta \in \cI: e_\eta \in B}  \lambda(t,\eta) $$
and obtain that $\mu^*(dt,dx)-F_t^*(dx)dt$ is a compensated Poisson random measure.

Assume that $E=F \times G$, where  $G$ is the mark space of a homogeneous Poisson random measure $\tilde \mu$. 
We set $\mu=\mu^* \otimes \tilde \mu$ and obtain a model which is driven by the defaults and possible further jumps from $\tilde \mu$. Note that (A1) is satisfied: 
\begin{align*} 
  \ind{\tau_\eta > t} &= 1-  \ind{\tau_\eta \le t} = 1- \mu^*([0,t] \times \{e_\eta\}) 
    \\
    &= 1 - \int_0^t \ind{\tau_\eta \ge s} \int_E \ind{F \times \{e_\eta\}}(x) \mu(ds,dx),
\end{align*} 
and choosing $\beta(s,\eta,x) = - \ind{ F \times \{e_\eta\} }(x)$ yields the desired representation \eqref{eq:tau}.

The market trades bonds for each companies whose prices are denoted by $P(t,T,\eta)$ and we assume that they follow the HJM-representation in terms of forward rates given in \eqref{eq:TxbondsViaForwardRates}. This model is included in our setup such that all our results can be applied. In particular, the default intensity of a single company can depend of the number of defaults in the market in the past, such that features like self-excitement can be included (see \citeN{GieseckeSpiliopoulosSowers2013} for an account of finite-dimensional markets with this feature).

\subsection{Stochastic mortality modelling}

In this section, we shall briefly illustrate our developed methods concerning positivity and monotonicity from Section~\ref{sec-pos-mon} in the context of stochastic mortality models. In the sequel, we follow the framework of \citeN{TappeWeber2013}, to which we refer for further details.

The death of an individual born at $-\eta\in\R_{\le 0}=\cI$ is denoted by a $\mathcal{G}$-measurable random time $\tau^\eta : \Omega \rightarrow (-\eta,\infty)$ for some larger $\sigma$-algebra $\mathcal{G} \supset \mathcal{F}$. The filtration $(\mathcal{F}_t)_{t \geq 0}$, which is called the background information, contains all information about the likelihoods, but no information about the exact times of death events. We define the \emph{survival process}
\begin{align*}
G(t,t,\eta) := \mathbb{P}(\tau^\eta > t \,|\, \mathcal{F}_t),
\end{align*}
and for $t \leq T$ we define the \emph{forward survival process}
\begin{align*}
G(t,T,\eta) := \mathbb{P}(\tau^\eta > T \,|\, \mathcal{F}_t) = \mathbb{E}[G(T,T,\eta)  \,|\, \mathcal{F}_t]
\end{align*}
as the best prediction at date $t$ of the fraction of individuals born at date $-\eta$ that survive until a future date $T$. 
Then the forward survival processes become martingales which allows us to relate this approach to our setup where NAFL was equivalent to considering equivalent local martingale measures.

In this regard, we may proceed as follows: after performing the Musiela type change of parameters $(t,T,\eta) \mapsto (t,T-t,\eta+t) =: (t,x,\xi)$, we can model the mortality rates $\mu$ as a SPDE
\begin{align*}
d \mu_t = \big( (\partial_x - \partial_\xi) \mu_t + \alpha(\mu_t) \big) dt + \sigma(\mu_t)dW_t + \int_E \delta(\mu_{t-},y) (\mu(dt,dy) - \nu(dy)dt),
\end{align*}
where the state space $H$ is a separable Hilbert space consisting of surfaces $h : \Xi \rightarrow \mathbb{R}$ with domain $\Xi = \{ (x,\xi) \in \mathbb{R}_+ \times \mathbb{R} : -\xi \leq x \}$. Forward survival process now play the role of bond prices in the following form 
$$ G(t,T,\eta) = F(\eta) \exp\Big(- \int_{0}^T  \mu_{s\wedge t}(s,\eta) ds\Big), \quad 0 \le t \le T. $$
Due to the martingale property of the survival processes, we obtain a drift condition in the form 
\begin{align*}
\alpha_t(x,\xi) &= \sum_{k \in \mathbb{N}} \sigma_t^k(x,\xi) \int_{0}^{x} \sigma_t^k(u,\xi) du
\\ &\quad - \int_E \delta_t(x,\xi,y) \bigg[ \exp \bigg( -\int_{0}^{x} \delta_t(u,\xi,y) du \bigg) - 1 \bigg] \nu(dy),
\end{align*}
similar to the HJM drift condition for default free bond markets.
As in Section~\ref{sec-pos-mon}, we can formulate appropriate conditions on $\alpha$, $\sigma$ and $\delta$ for the positivity preserving property of this SPDE. Then the survival processes calculated from these mortality rates satisfy $0 \leq G(t,T,\eta) \leq 1$. Therefore, they are not only local, but even true martingales, and hence, the mortality model is monotone by Proposition \ref{prop-MT-monoton}.


\begin{thebibliography}{}

\bibitem[\protect\citeauthoryear{Barski}{Barski}{2013}]{Barski2013}
Barski, M. (2013).
\newblock Monotonicity of the {CDO} term structure models.
\newblock {\em Working paper\/\/}.

\bibitem[\protect\citeauthoryear{Barski, Jakubowski, and Zabczyk}{Barski
  et~al.}{2011}]{BarskiJakubowskiZabczyk2011}
Barski, M., J.~Jakubowski, and J.~Zabczyk (2011).
\newblock On incompleteness of bond markets with infinite number of random
  factors.
\newblock {\em Math. Finance\/}~{\em 21\/}(3), 541--556.

\bibitem[\protect\citeauthoryear{Barski and Zabczyk}{Barski and
  Zabczyk}{2012}]{BarsikZabczyk2012}
Barski, M. and J.~Zabczyk (2012).
\newblock Heath-{J}arrow-{M}orton-{M}usiela equation with {L}\'evy
  perturbation.
\newblock {\em J. Differential Equations\/}~{\em 253\/}(9), 2657--2697.

\bibitem[\protect\citeauthoryear{Bielecki and Rutkowski}{Bielecki and
  Rutkowski}{2000}]{BieleckiRutkowski00}
Bielecki, T. and M.~Rutkowski (2000).
\newblock Multiple ratings model of defaultable term structure.
\newblock {\em Mathematical Finance\/}~{\em 10}, 125--139.

\bibitem[\protect\citeauthoryear{Bielecki and Rutkowski}{Bielecki and
  Rutkowski}{2002}]{BieleckiRutkowski2002}
Bielecki, T. and M.~Rutkowski (2002).
\newblock {\em Credit Risk: Modeling, Valuation and Hedging}.
\newblock Springer Verlag. Berlin Heidelberg New York.

\bibitem[\protect\citeauthoryear{Bj\"ork, {Di Masi}, Kabanov, and
  Runggaldier}{Bj\"ork et~al.}{1997}]{BMKR}
Bj\"ork, T., G.~B. {Di Masi}, Y.~Kabanov, and W.~J. Runggaldier (1997).
\newblock Towards a general theory of bond markets.
\newblock {\em Finance and Stochastics\/}~{\em 1}, 141--174.

\bibitem[\protect\citeauthoryear{Carmona and Tehranchi}{Carmona and
  Tehranchi}{2004}]{CarmonaTehranchi}
Carmona, R. and M.~Tehranchi (2004).
\newblock A characterization of hedging portfolios for interest rate contingent
  claims.
\newblock {\em The Annals of Applied Probability\/}~{\em 14}, 1267--1294.

\bibitem[\protect\citeauthoryear{Carmona and Tehranchi}{Carmona and
  Tehranchi}{2006}]{CarmonaTehranchi2006}
Carmona, R.~A. and M.~R. Tehranchi (2006).
\newblock {\em Interest rate models: an infinite dimensional stochastic
  analysis perspective}.
\newblock Springer Finance. Berlin: Springer-Verlag.

\bibitem[\protect\citeauthoryear{{Da Prato} and Zabczyk}{{Da Prato} and
  Zabczyk}{1992}]{DaPratoZabczyk}
{Da Prato}, G. and J.~Zabczyk (1992).
\newblock {\em Stochastic Equations in Infinite Dimensions}.
\newblock Cambridge University Press.

\bibitem[\protect\citeauthoryear{{De Donno} and Pratelli}{{De Donno} and
  Pratelli}{2005}]{DeDonnoPratelli2005}
{De Donno}, M. and M.~Pratelli (2005).
\newblock A theory of stochastic integration for bond markets.
\newblock {\em Annals of Applied Probability\/}~{\em 15}, 2773--2791.

\bibitem[\protect\citeauthoryear{Dellacherie and Meyer}{Dellacherie and
  Meyer}{1982}]{DellacherieMeyer}
Dellacherie, C. and P.~A. Meyer (1982).
\newblock {\em Probabilit\'es et potentiel}.
\newblock Hermann: Paris.

\bibitem[\protect\citeauthoryear{Eberlein and \"Ozkan}{Eberlein and
  \"Ozkan}{2003}]{eberlein.oezkan03}
Eberlein, E. and F.~\"Ozkan (2003).
\newblock The defaultable {L}\'evy term structure: Ratings and restructuring.
\newblock {\em Mathematical Finance\/}~{\em 13}, 277 -- 300.

\bibitem[\protect\citeauthoryear{Ekeland and Taflin}{Ekeland and
  Taflin}{2005}]{EkelandTaflin2005}
Ekeland, I. and E.~Taflin (2005).
\newblock A theory of bond portfolios.
\newblock {\em Ann. Appl. Probab.\/}~{\em 15\/}(2), 1260--1305.

\bibitem[\protect\citeauthoryear{Filipovi\'{c}}{Filipovi\'{c}}{2001}]{Filipovic2001}
Filipovi\'{c}, D. (2001).
\newblock {\em Consistency Problems for {Heath-Jarrow-Morton} Interest Rate
  Models}, Volume 1760 of {\em Lecture Notes in Mathematics}.
\newblock Springer Verlag. Berlin Heidelberg New York.

\bibitem[\protect\citeauthoryear{Filipovi\'{c}}{Filipovi\'{c}}{2009}]{Filipovic2009}
Filipovi\'{c}, D. (2009).
\newblock {\em Term Structure Models: A Graduate Course}.
\newblock Springer Verlag. Berlin Heidelberg New York.

\bibitem[\protect\citeauthoryear{Filipovi\'{c}, Overbeck, and
  Schmidt}{Filipovi\'{c} et~al.}{2011}]{FilipovicSchmidtOverbeck}
Filipovi\'{c}, D., L.~Overbeck, and T.~Schmidt (2011).
\newblock Dynamic {CDO} term structure modelling.
\newblock {\em Mathematical Finance\/}~{\em 21}, 53 -- 71.

\bibitem[\protect\citeauthoryear{Filipovi{\'c}, Tappe, and
  Teichmann}{Filipovi{\'c} et~al.}{2010}]{FilipovicTappeTeichmann2010}
Filipovi{\'c}, D., S.~Tappe, and J.~Teichmann (2010).
\newblock Term structure models driven by {W}iener processes and {P}oisson
  measures: existence and positivity.
\newblock {\em SIAM J. Financial Math.\/}~{\em 1}, 523--554.

\bibitem[\protect\citeauthoryear{Giesecke, Spiliopoulos, and Sowers}{Giesecke
  et~al.}{2013}]{GieseckeSpiliopoulosSowers2013}
Giesecke, K., K.~Spiliopoulos, and R.~Sowers (2013).
\newblock Default clustering in large portfolios: Typical events.
\newblock {\em Ann. Appl. Prob.\/}~{\em 23}, 348 -- 385.

\bibitem[\protect\citeauthoryear{Heath, Jarrow, and Morton}{Heath
  et~al.}{1992}]{HJM}
Heath, D., R.~A. Jarrow, and A.~J. Morton (1992).
\newblock Bond pricing and the term structure of interest rates.
\newblock {\em Econometrica\/}~{\em 60}, 77--105.

\bibitem[\protect\citeauthoryear{Jacod}{Jacod}{1975}]{Jacod75}
Jacod, J. (1975).
\newblock Multivariate point processes: Predictable projection, {Radon-Nikodym}
  derivatives, representation of martingales.
\newblock {\em Zeitschrift f\"ur Wahrscheinlichkeitstheorie und verwandte
  Gebiete\/}~{\em 31}, 235 -- 253.

\bibitem[\protect\citeauthoryear{Jacod and Shiryaev}{Jacod and
  Shiryaev}{2002}]{JacodShiryaev}
Jacod, J. and A.~Shiryaev (2002).
\newblock {\em Limit Theorems for Stochastic Processes\/} (2nd ed.).
\newblock Berlin: Springer Verlag.

\bibitem[\protect\citeauthoryear{Jakubowski and Zabczyk}{Jakubowski and
  Zabczyk}{2007}]{JakubowskiZabczyk:EMHJM}
Jakubowski, J. and J.~Zabczyk (2007).
\newblock Exponential moments for {HJM} models with jumps.
\newblock {\em Finance Stoch.\/}~{\em 11\/}(3), 429--445.

\bibitem[\protect\citeauthoryear{Jarrow, Lando, and Turnbull}{Jarrow
  et~al.}{1997}]{JLT}
Jarrow, R., D.~Lando, and S.~Turnbull (1997).
\newblock A {Markov} model for the term structure of credit risk spreads.
\newblock {\em Review of Financial Studies\/}~{\em 10}, 481--523.

\bibitem[\protect\citeauthoryear{Jarrow and Roch}{Jarrow and
  Roch}{2013}]{JarrowRoch2013}
Jarrow, R.~A. and A.~F. Roch (2013).
\newblock Liquidity risk and the term structure of interest rates.
\newblock {\em SSRN\/}.

\bibitem[\protect\citeauthoryear{Klein}{Klein}{2000}]{Klein2000}
Klein, I. (2000).
\newblock A fundamental theorem of asset pricing for large financial markets.
\newblock {\em Math. Finance\/}~{\em 10\/}(4), 443--458.

\bibitem[\protect\citeauthoryear{Klein, Schmidt, and Teichmann}{Klein
  et~al.}{2013}]{KleinSchmidtTeichmann2013}
Klein, I., T.~Schmidt, and J.~Teichmann (2013).
\newblock When roll-overs do not qualify as num\'eraire: bond markets beyond
  short rate paradigms.
\newblock {\em arxiv\/}.

\bibitem[\protect\citeauthoryear{Musiela}{Musiela}{1993}]{Musiela}
Musiela, M. (1993).
\newblock Stochastic {PDEs} and term structure models.
\newblock {\em Journ\'ees Internationales de France, IGR-AFFI, La Baule\/}.

\bibitem[\protect\citeauthoryear{\"{O}zkan and Schmidt}{\"{O}zkan and
  Schmidt}{2005}]{SchmidtOezkan}
\"{O}zkan, F. and T.~Schmidt (2005).
\newblock Credit risk with infinite dimensional {L\'evy} processes.
\newblock {\em Statistics and Decisions\/}~{\em 23}, 281 -- 299.

\bibitem[\protect\citeauthoryear{Schmidt}{Schmidt}{2006}]{TSchmidt_InfiniteFactors}
Schmidt, T. (2006).
\newblock An infinite factor model for credit risk.
\newblock {\em International Journal of Theoretical and Applied Finance\/}~{\em
  9}, 43--68.

\bibitem[\protect\citeauthoryear{Schmidt and Zabczyk}{Schmidt and
  Zabczyk}{2012}]{SchmidtZabczyk12}
Schmidt, T. and J.~Zabczyk (2012).
\newblock {CDO term structure modelling with L\'evy processes and the relation
  to market models}.
\newblock {\em International Journal of Theoretical and Applied Finance\/}~{\em
  15\/}(1).

\bibitem[\protect\citeauthoryear{Tappe}{Tappe}{2010}]{Tappe2010}
Tappe, S. (2010).
\newblock An alternative approach on the existence of affine realizations for
  {HJM} term structure models.
\newblock {\em Proceedings of The Royal Society of London. Series A.
  Mathematical, Physical and Engineering Sciences\/}~{\em 466}, 3033 -- 3060.

\bibitem[\protect\citeauthoryear{Tappe}{Tappe}{2012a}]{Tappe2012b}
Tappe, S. (2012a).
\newblock The {HJMM} equation with real-world dynamics driven by {W}iener
  processes and {P}oisson random measures.
\newblock {\em {Preprint, Leibniz Universit\"{a}t Hannover}\/}.

\bibitem[\protect\citeauthoryear{Tappe}{Tappe}{2012b}]{Tappe2012a}
Tappe, S. (2012b).
\newblock Some refinements of existence results for {SPDEs} driven by {W}iener
  processes and {P}oisson random measures.
\newblock {\em International Journal of Stochastic Analysis\/}~{\em Article ID
  236327}, 24.

\bibitem[\protect\citeauthoryear{Tappe and Weber}{Tappe and
  Weber}{2013}]{TappeWeber2013}
Tappe, S. and S.~Weber (2013).
\newblock Stochastic mortality models: An infinite-dimensional approach.
\newblock {\em Finance and Stochastics, to appear\/}.

\end{thebibliography}

\end{document}